\pgfplotsset{compat = newest}
\theoremstyle{plain}
\newtheorem{theorem}{Theorem}
\newtheorem{proposition}{Proposition}
\newtheorem{lemma}{Lemma}
\newtheorem{corollary}{Corollary}[theorem]
\theoremstyle{definition}
\newtheorem{definition}{Definition}
\newtheorem{example}{Example}
\theoremstyle{remark}
\newtheorem{remark}{Remark}
\newcommand{\refappendix}[1]{\hyperref[#1]{Appendix~\ref*{#1}}}
\newcommand{\Wtwo}[3]{W_{f_{\hat{u}_#1}, \mathcal{D}_{\hat{u}_#1}, \{\hat{u}_#2, \hat{u}_#3\}}}
\newcommand{\Wthree}[4]{W_{f_{\hat{u}_#1}, \mathcal{D}_{\hat{u}_#1}, \{\hat{u}_#2, \hat{u}_#3, \hat{u}_#4\}}}
\title{Unselfish Coded Caching can Yield Unbounded Gains over Symmetrically Selfish Caching}
\author{Federico Brunero\textsuperscript{\orcidlink{0000-0002-6980-3827}} and Petros Elia\textsuperscript{\orcidlink{0000-0002-3531-120X}}%
  \thanks{%
    This work was supported by the European Research Council (ERC) through the EU Horizon 2020 Research and Innovation Program under Grant 725929 (Project DUALITY).

    The authors are with the Communication Systems Department at EURECOM, 450 Route des Chappes, 06410 Sophia Antipolis, France (email: brunero@eurecom.fr; elia@eurecom.fr).
  }
}
\begin{document}

\maketitle

\begin{abstract}
The original coded caching scenario assumes a content library that is of interest to all receiving users. In a realistic scenario though, the users may have diverging interests which may intersect to various degrees. What happens for example if each file is of potential interest to, say, $\SI{40}{\percent}$ of the users and each user has potential interest in $\SI{40}{\percent}$ of the library? In this work, we investigate the so-called \emph{symmetrically selfish coded caching} scenario, where each user only makes requests from a subset of the library that defines its own \emph{File Demand Set (FDS)}, each user caches selfishly only contents from its own FDS, and where the different FDSs symmetrically overlap to some extent. In the context of various traditional prefetching scenarios (prior to the emergence of coded caching), selfish approaches were known to be potentially very effective. On the other hand --- with the exception of some notable works --- little is known about selfish coded caching. We here present a new information-theoretic converse that proves, in a general setting of symmetric FDS structures, that selfish coded caching, despite enjoying a much larger local caching gain and a much smaller set of possible demands, introduces an unbounded load increase compared to the unselfish case. In particular, in the $K$-user broadcast channel where each user stores a fraction $\gamma$ of the library, where each file (class) is of interest to $\alpha$ users, and where any one specific file is of interest to a fraction $\delta$ of users, the optimal coding gain of symmetrically selfish caching is at least $(K - \alpha)\gamma + 1$ times smaller than in the unselfish scenario. This allows us to draw the powerful conclusion that the optimal selfish coding gain is upper bounded by $1/(1 - \delta)$, and thus does not scale with $K$. These derived limits are shown to be exact for different types of demands.
\end{abstract}

\begin{IEEEkeywords}
  Coded Caching, File Popularity, Index Coding, Information-Theoretic Converse, Selfish Caching. 
\end{IEEEkeywords}

\section{Introduction}
\nocite{6763007}

\IEEEPARstart{T}{he} vast increase of network traffic has sparked considerable interest in finding new techniques that reduce the communication load. Toward this, caching has been traditionally used to bring contents closer to their destinations, thus reducing the volume of the communication problem during peak hours~\cite{5461964}. A key ingredient in using caches has commonly been the exploitation of the fact that some contents/files are more popular than others, and thus are generally to be allocated more cache space~\cite{ZHANG20133128, 6600983}. This inevitably introduces the consideration that different users may have different file preferences, which in turn brings to the fore the concept of \emph{selfish caching} where simply users cache independently and selfishly only contents that they are interested in potentially consuming themselves~\cite{8849537, 9148779, 8849357, 9149113}. In the traditional prefetching scenario where emphasis is based heavily on bringing relevant content closer to each user, this idea of selfish caching brought about performance improvements \cite{7485863, 9098073} in the form of higher local caching gains for each user.

A completely different utilization of caching was witnessed with the advent of coded caching~\cite{6763007}, whose focus is more on leveraging storage capabilities in order to reduce interference. Depending on the network topology, this coded variant can be a more powerful approach than traditional prefetching, because it employs caching not only to change the \emph{volume} of the communication problem, but also to change the \emph{structure} of the problem itself, simply by changing the interference patterns. Coded caching has been rightfully credited with being able to transform memory into data rates, and has hence sparked a flurry of research on a variety of topics such as on the interplay between caching and PHY~\cite{7580630, 8950279, shariatpanahi2018multiserver, 7864374, 8007072,8036265,9149433,8374074,naderializadehInterferenceManagementTransIT2017,ShariatpanahiPhysicalLayer2019TransIT,9007518}, caching and privacy~\cite{7959863, 9330765, 9249022}, on information-theoretic converses~\cite{8963629, 8226776}, on the critical bottleneck of subpacketization~\cite{yanPDATransIT2017,tangSubpacketizationTransIT2018,krishnanCCBipartiteGraphs2018, shangguanHypergraphsTransIT2018,8374958}, and a variety of other scenarios~\cite{8977539,8911362,9163148,9036921,8278044,9328826}.

In trying to fuse the traditional caching techniques with coded caching, a variety of works has naturally sought to explore coded caching in the presence of files with different popularity. This is an area of active research that has produced several interesting and insightful results~\cite{7782760, 8091300, 7904696, quinton2018novel, 8422960, 8322577, 8863425, 7994812, deng2021fundamental, 7865913, deng2021memoryrate} that focus on the scenario where the file popularity profiles are identical for every user.

\subsection{Heterogeneous User Profiles and Selfish Coded Caching}

On the other hand, we are just beginning to explore the connection between coded caching and selfish caching, where by selfish caching we generally refer to caching schemes in which each user caches only contents that meet its own individual preferences and objectives.

Recent works have sought to explore this connection. For example, in the context of coded caching with users having heterogeneous content preferences, the recent work in~\cite{9264196} took a game theoretic perspective to analyze the performance of coded caching when it accounts for this heterogeneity. Employing interesting analysis, this work revealed gains from taking this heterogeneity into consideration, where these gains were naturally a function of the structure of the user preferences. Furthermore, the work in~\cite{8849537} analyzed the peak load of three different coded caching schemes that account for the user preferences, and again revealed occasional performance gains that are similarly dependent on the structure of these preferences. Related analysis appears in~\cite{9148779}, now for the average load of these same schemes in~\cite{8849537}.

On the other hand, the work in~\cite{8849357} focused on finding instances where unselfish coded caching outperforms selfish designs. This work nicely considered the performance of selfish coded caching in the context of heterogeneous file demand sets, cleverly employing bounds to show that, for the case of $K = 2$ users and $3$ files, unselfish designs strictly outperform selfish designs in terms of communication load, albeit only by a factor of $\SI{14}{\percent}$. In addition, the notable work in~\cite{9149113} established the optimal average load --- under the assumption of selfish and uncoded prefetching --- for the case of $K = 2$ users and a variety of overlaps between the two users' profiles, also providing explicit prefetching schemes of a selfish nature. To the best of our understanding, the above constitutes the extent of works on selfish coded caching.

\subsection{An Adversarial Interplay between Coded Caching and Selfish Caching}

Our motivation to understand the interplay between coded caching and selfish caching comes not only from the fact that coded caching systems may indeed need to operate under some selfish legacy constraints\footnote{Here we can think of a scenario where a server delivers --- via a bottleneck link --- content to caches, whose purpose is to bring content closer to the end user via dedicated non-interfering links.  In such scenario, the delivery \emph{to} the caches would benefit from a coded caching design, while the subsequent delivery \emph{from} the caches would benefit from a selfish placement since the caches may target groups of users with potentially dissimilar interests.}, but also mainly from the fact that there exists an interesting ``adversarial'' interplay between coded caching and selfish prefetching. To understand this a bit better, we recall that the main idea of coded caching is that it multicasts at any given time a linear combination of different contents desired by different users. This implies that any one receiver associated to a multicast message must be able to find in its cache all the undesired contents (subfiles) of that multicast message. This is achieved in \cite{6763007} by means of a highly structured and coordinated content placement phase, where each user caches a small fraction of \emph{every} file of a common library. This relationship between undesired and cached contents deteriorates when using selfish caching, simply because each receiver selfishly opts --- based on its own preferences --- to not cache some of these undesired files. These same undesired files though may eventually appear as interference at that selfish receiver who will now not be able to ``cache-out'' this interference. At the same time though, such selfish caching allows for a much more targeted placement of files such that each user can cache more of what it actually wants. Furthermore, such selfish scenario would correspond to a substantially smaller set of possible demands, which could conceivably be exploited to reduce the load.

\subsection{Main Contributions}

To understand this interplay between coded caching and selfish caching, we first propose a new selfish model which aims to calibrate the selfishness effect, by calibrating the degree of separation between the interests of the different users.  Our so-called symmetric File Demand Set (FDS) structure not only aims to encapsulate this aspect of intersection of interests, but is also designed to reflect and accentuate the aforementioned adversarial relationship between the selfish placement and the ability to encode across users as one would expect in the coded caching setting. 

Then, for the aforementioned symmetric FDS structure, we employ index coding arguments to derive an information-theoretic converse (lower bound) on the optimal worst-case communication load under the assumption of uncoded and selfish placement. This bound proves that generally unselfish coded caching far outperforms selfish coded caching. The bound makes clear the fact that, while, as noted, selfish caching implies a much smaller set of possible demands (cf.~\Cref{def: Uncoded and Selfish Cache Placement Definition}) as well as allows for a much more targeted placement of contents, these benefits come at a heavy cost of fewer multicasting opportunities and a substantial loss in coding gain.
The main contribution of our work is this information-theoretic converse. 

This same converse offers some interesting insights on coding designs for selfish coded caching. While our converse now reveals that such designs, even if they are optimally constructed, would essentially never be able to provide good performance, these designs do pose an exceptionally interesting and challenging coding problem, which we address partially by providing, for a class of demands, achievable schemes whose performance matches the expression of the converse.

\subsection{Paper Outline}

The rest of the paper is organized as follows. The system model is presented in~\Cref{sec: System Model}, where~\Cref{sec: Motivating Example} offers a small motivating example that can help the reader appreciate the dynamics of selfish coded caching. Then, \Cref{sec: Converse Bound} presents the information-theoretic converse, whose proof in \Cref{sec: Proof Converse Bound} is followed by a clarifying example. The proposed selfish coded caching placement is presented in~\Cref{sec: The Scheme for alpha-Demands} and so are the delivery designs for some sets of demands. Additional optimal schemes are presented in~\Cref{sec: Some Optimal Schemes for Circular Demands} for other sets of demands. \Cref{sec: Conclusion} concludes the paper, while some of the proofs are relegated in the appendices.

\subsection{Notation}\label{sec: Notation}  

We denote by $\mathbb{Z}^{+}$ the set of positive integers. For $n \in \mathbb{Z}^{+}$, we define $[n] \coloneqq \{1, 2, \dots, n\}$. If $a, b \in \mathbb{Z}^{+}$ such that $a < b$, then $[a : b] \coloneqq \{a, a + 1, \dots, b - 1, b\}$. For sets we use calligraphic symbols, whereas for vectors we use bold symbols. Given a finite set $\mathcal{A}$, we denote by $|\mathcal{A}|$ its cardinality. We use $\binom{n}{k}$ to denote the binomial coefficient $\frac{n!}{k!(n - k)!}$ and we let $\binom{n}{k} = 0$ whenever $n < 0$, $k < 0$ or $n < k$. We use the $\oplus$ symbol to denote the bitwise XOR operation. For $\bm{u} = (u_{1}, \dots, u_{K})$ being a permutation of the set $[K]$, we use $u \colon [K] \to [K]$ to denote the function which takes as input an element from $[K]$ and outputs its index position in $\bm{u}$.

\section{System Model}\label{sec: System Model}

Similarly to the original scenario in \cite{6763007}, we consider the centralized caching scenario (cf.~\Cref{fig: Centralized Coded Caching Model}) where one central server has access to a library $\mathcal{L}$ containing $N$ files of $B$ bits each. This server is connected to $K$ users through a shared error-free broadcast channel, and each user is equipped with a cache of size $M$ files or, equivalently, $MB$ bits.

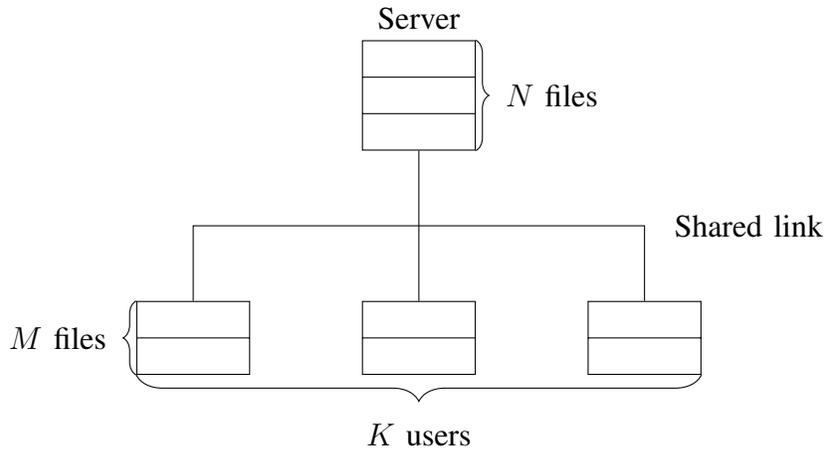
\begin{figure}[!htpb]
  \centering
  \tikzset{server/.style={draw, rectangle, minimum height = 1.5cm, minimum width = 2.5cm, text centered}}
  \tikzset{user/.style={draw, rectangle, minimum height = 0.75cm, minimum width = 2.5cm, text centered}}
  \begin{tikzpicture}
    \node[draw, rectangle split, rectangle split parts = 3, minimum width = 1.5cm](server){};
    \node at (server.north)[above]{Server};
    \draw[decorate, decoration = {brace, amplitude = 5}] (server.north east)--(server.south east) node[midway, right, xshift = 0.25cm]{$N$ files};
    \draw (server.south)--++(0, -1);
    \draw ($(server.south) + (0, -1)$)--++(-3, 0)--++(0, -1) node[draw, rectangle split, rectangle split parts = 2, minimum width = 1.5cm, anchor = north](user1){};
    \draw ($(server.south) + (0, -1)$)--++(0, -1) node[draw, rectangle split, rectangle split parts = 2, minimum width = 1.5cm, anchor = north](user2){};
    \draw ($(server.south) + (0, -1)$)--++(3, 0)--++(0, -1) node[draw, rectangle split, rectangle split parts = 2, minimum width = 1.5cm, anchor = north](user3){};
    \draw[decorate, decoration = {brace, mirror, amplitude = 5}] (user1.north west)--(user1.south west) node[midway, left, xshift = -0.25cm]{$M$ files};
    \draw[decorate, decoration = {brace, mirror, amplitude = 10}] (user1.south west) -- (user3.south east) node[midway, below, yshift = -0.5cm]{$K$ users};
    \node at ($(server.south) + (3, -1)$)[right, xshift = 0.25cm]{Shared link};
  \end{tikzpicture}
  \caption{A server with access to a library of $N$ files is connected through an error-free unit-capacity shared link to $K$ users, each having a cache of size equal to $M$ files.}
  \label{fig: Centralized Coded Caching Model}
\end{figure}

During the placement phase, the server fills the caches of the users according to a caching policy without knowing the future requests. During the delivery phase, when the users simultaneously reveal their demands, the server sends coded messages over the shared link to deliver the missing information to each user. Assuming that in the delivery phase each user demands simultaneously one file, the worst-case communication load $R$ is defined as the total number of transmitted bits, normalized by the file size $B$, that can guarantee delivery of all requested files in the worst-case scenario. The optimal communication load $R^{\star}$ is then formally defined as
\begin{equation}
  R^{\star}(M) \coloneqq \inf \{R : (M, R) \text{ is achievable}\}
\end{equation}
where the tuple $(M, R)$ is said to be \emph{achievable} if there exists a caching-and-delivery scheme which guarantees, for any possible demand, a load $R$.

For the original coded caching scenario in~\cite{6763007} --- where every file is of potential interest to each user --- the load takes the form 
\begin{equation}\label{eqn: MAN Scheme Load}
  R_{\text{MAN}}(t) = \frac{\binom{K}{t + 1}}{\binom{K}{t}} = \frac{K - t}{t + 1} = \frac{K(1 - \gamma)}{K\gamma + 1}, \quad \forall t \in [0 : K]
\end{equation}
where $t \coloneqq KM/N = K\gamma$ is the so-called \emph{cache redundancy} and $\gamma \coloneqq M/N$ is the fraction of the library that each user is able to store. This performance was proven in~\cite{8963629} (see also \cite{8226776}) to be optimal under the assumption of uncoded cache placement. 
The above reveals a speedup factor of $t+1$ over the case of uncoded delivery. This speedup factor is a result of being able to serve \emph{any} $(t + 1)$-tuple of users with a multicast message, which can generally happen if we are able to store bits of each file to \emph{any} possible $t$-tuple of caches. This symmetry will naturally be disrupted once selfish placement is imposed.

\subsection{The Symmetric \texorpdfstring{$(K, \alpha, f)$}{(K, alpha, f)} FDS Structure}

To capture the interplay between coded caching and selfish caching, we propose an FDS structure that allows us to calibrate the degree of separation between the interests of the different users.  To better understand this structure and generally to better understand the concept of an FDS, let us briefly consider a simplified toy example. 

\begin{example}
Consider a downlink scenario with $K = 3$ users and a library $\mathcal{L} = \{A, B, C, D, E, F\}$ of $N = 6$ files\footnote{Such files can be movies, different episodes of a TV show, YouTube videos, etc.}. Let us now assume that user~$1$ is only interested in potentially consuming files from the file demand set $\mathcal{F}_{1} = \{A, B, C, D\}$, user~$2$ only from the set $\mathcal{F}_{2} = \{A, B, E, F\}$, and user~$3$ only from $\mathcal{F}_{3} = \{C, D, E, F\}$. In this setting, each user is interested in a fraction $2/3$ of the library, so for example user~$1$ has no interest in ever consuming the files in $\mathcal{L} \setminus \mathcal{F}_{1} = \{E, F\}$. Similarly, each file is of interest to the same fraction $2/3$ of users, so for example file $A$ is only of interest to user~$1$ and user~$2$. 
\end{example}

For such a setting, we wish to understand the performance of selfish coded caching where each user caches only contents from its own FDS. We proceed with the formal definition of the FDS structure. We note that below an FDS will be defined as a collection of file \emph{classes}, rather than just a collection of files. This allows for more generality and we believe it also better reflects how user preferences are often categorized. 

\begin{definition}[The Symmetric $(K, \alpha, f)$ FDS Structure]
 For $\alpha \in [K]$ and for $f \in \mathbb{Z}^{+}$, the symmetric $(K, \alpha, f)$ FDS structure assumes an $N$-file library $\mathcal{L} = \{\mathcal{W}_{\mathcal{S}} : \mathcal{S} \subseteq [K], |\mathcal{S}| = \alpha \}$ to be a collection of disjoint file classes 
 $\mathcal{W}_{\mathcal{S}} = \{ W_{i, \mathcal{S}} : i \in [f]\}$, with each class $\mathcal{W}_{\mathcal{S}}$ consisting of $f$ different files. In this setting, each user $k \in [K]$ has a File Demand Set
  \begin{equation}
    \mathcal{F}_{k}  = \left\{\mathcal{W}_{\mathcal{S}} : \mathcal{S} \subseteq [K], |\mathcal{S}| = \alpha, k \in \mathcal{S} \right\}
  \end{equation}
which describes the files this user is potentially interested in. 
\end{definition}
As the above says, the library is split into $C = \binom{K}{\alpha}$ disjoint classes of files, corresponding to $N = fC = f\binom{K}{\alpha}$ files in total. The above also says that each user $k$ is interested in its own FDS $\mathcal{F}_{k}$ of $|\mathcal{F}| = |\mathcal{F}_{k}| = f\binom{K - 1}{\alpha - 1}$ files. There are $K$ FDSs, one for each user, and each file class is identified by an $\alpha$-tuple $\mathcal{S}$ that tells us which $\alpha$ users are interested in this class\footnote{In other words, each file belongs to $\alpha$ FDSs. In particular, each file in class $\mathcal{W}_{\mathcal{S}}$ is of interest to the $\alpha$ users in $\mathcal{S}$. Hence, if $\mathcal{S} \ni k$, then the $f$ files in $\mathcal{W}_{\mathcal{S}}$ are in $\mathcal{F}_{k}$ and are thus of interest to user $k$. Finally, under our simplifying assumption that each user has its own FDS, $\alpha$ also describes the number of users interested in any one specific file.}. Finally, we note that $\alpha = 1$ corresponds to the trivial scenario where there is no intersection between the user interests, while $\alpha = K$ corresponds to the traditional unselfish scenario where a common library of $N = f$ files\footnote{In this case we assume $f \geq K$.} is of interest to every user.

In this context, selfish caching places the constraint that each user $k$ can only cache from its own FDS $\mathcal{F}_{k}$. Thus, one key aspect of such selfish caching is that it brings about an increase of the effective normalized cache size for each user. Indeed, whereas in the unselfish scenario each user can cache a fraction 
\begin{equation}
  \gamma = \frac{M}{N} = \frac{t}{K}
\end{equation}
of each file of possible interest, in the selfish scenario this fraction is elevated to a larger 
\begin{equation}
  \gamma_{\alpha} \coloneqq \frac{M}{|\mathcal{F}|} = \frac{t}{\alpha} = \gamma \frac{K}{\alpha}
\end{equation}
which in turn implies a larger local caching gain. 

Deviating from standard notation practices, we will use the double-index notation $W_{f_k, \mathcal{D}_k}$ to denote the file requested by user $k$. Consequently, to describe the entire demand set, we will now be needing two vectors $\bm{d} = (\mathcal{D}_1, \dots, \mathcal{D}_K)$ and $\bm{f} = (f_1, \dots, f_K)$.

The above structure nicely lets us calibrate the fraction 
\begin{equation}
  \frac{|\mathcal{F}|}{N} = \frac{f\binom{K - 1}{\alpha - 1}}{f\binom{K}{\alpha}} = \frac{\alpha}{K}
\end{equation}
of the total library that each user is interested in. The imposed symmetry also yields a fraction $\delta \coloneqq \alpha/K$ of users interested in any one specific file.

\begin{table}[!htbp]
    \centering
    \renewcommand{\arraystretch}{1.3}
    \caption{Important parameters for the symmetric $(K, \alpha, f)$ FDS structure}
    \label{tab: Important Parameters for the FDS Structure}
    \begin{tabular}{lc@{\hspace{1cm}}lc}
    \toprule
    Total FDSs & $K$ & Files per Class & $f$ \\
    \midrule
    Total File Classes & $\binom{K}{\alpha}$ & Total Files & $f\binom{K}{\alpha}$ \\
    \midrule
    File Classes per FDS & $\binom{K - 1}{\alpha - 1}$ & Files per FDS & $f\binom{K - 1}{\alpha - 1}$ \\
    \midrule
    Fraction of Users Interested in a File & $\alpha/K$ & Fraction of Files of Interest to a User & $|\mathcal{F}|/N$ \\
    \bottomrule 
    \end{tabular}
\end{table}

The following two examples can help familiarize the reader with the notation. 

\begin{example}[The Symmetric $(4, 2, 1)$ FDS Structure]
  Let us consider the $(K, \alpha, f) = (4, 2, 1)$ structure which has $C = \binom{K}{\alpha} = 6$ file classes $\mathcal{W}_{12}, \mathcal{W}_{13},\mathcal{W}_{14}, \mathcal{W}_{23}, \mathcal{W}_{24}, \mathcal{W}_{34}$, where\footnote{We will often omit braces and commas when indicating sets, such that for example $W_{\{1, 2\}}$ may be written as $W_{12}$.} each class consists of $f = 1$ file. This corresponds to a library $\mathcal{L} = \{W_{1, 12}, W_{1, 13}, W_{1, 14}, W_{1, 23}, W_{1, 24}, W_{1, 34}\}$ of $N = 6$ files. In the above, $W_{1, 12}$ simply represents the first (and, in this case, the only) file in class $\mathcal{W}_{12}$. The $K = 4$ FDSs take the form
  \begin{align}
    \mathcal{F}_{1} &= \{W_{1, 12}, W_{1, 13}, W_{1, 14}\}\\
    \mathcal{F}_{2} &= \{W_{1, 12}, W_{1, 23}, W_{1, 24}\}\\
    \mathcal{F}_{3} &= \{W_{1, 13}, W_{1, 23}, W_{1, 34}\}\\
    \mathcal{F}_{4} &= \{W_{1, 14}, W_{1, 24}, W_{1, 34}\}
  \end{align}
   where we recall that for each file $W_{1, \mathcal{S}}$, the label $\mathcal{S}$ represents the FDSs the file belongs to. For example, file $W_{1, 23}$ belongs to $\mathcal{F}_{2}$ and $\mathcal{F}_{3}$, and is thus of interest to user~$2$ and user~$3$. Finally we see that each user is interested in a fraction $|\mathcal{F}|/N = 0.5$ of the library, i.e., in $\SI{50}{\percent}$ of the library, and that each file is of interest to a fraction $\delta = \alpha/K = 0.5$ of the users. 
\end{example}

\begin{example}[The Symmetric $(4, 3, 2)$ FDS Structure]\label{ex: Example of FDS Structure}
  Let us consider the $(K, \alpha, f) = (4, 3, 2)$ structure which has $C = \binom{K}{\alpha} = 4$ classes $\mathcal{W}_{123}, \mathcal{W}_{124}, \mathcal{W}_{134}, \mathcal{W}_{234}$ and $N = 8$ files: $W_{1,123}$ and $W_{2,123}$ from class $\mathcal{W}_{123}$, then $W_{1,124}$ and $W_{2,124}$ from class $\mathcal{W}_{124}$, and so on. The $K$ FDSs take the form
  \begin{align}
    \mathcal{F}_{1} &= \{\mathcal{W}_{123}, \mathcal{W}_{124}, \mathcal{W}_{134}\}\\
    \mathcal{F}_{2} &= \{\mathcal{W}_{123}, \mathcal{W}_{124}, \mathcal{W}_{234}\}\\
    \mathcal{F}_{3} &= \{\mathcal{W}_{123}, \mathcal{W}_{134}, \mathcal{W}_{234}\}\\
    \mathcal{F}_{4} &= \{\mathcal{W}_{124}, \mathcal{W}_{134}, \mathcal{W}_{234}\}
  \end{align}
  where we see that each FDS consists of $2\times 3 = 6$ files. For example user~$1$ is interested in files $\mathcal{F}_1 = \{W_{1, 123}, W_{2, 123}, W_{1, 124}, W_{2, 124}, W_{1, 134}, W_{2, 134}\}$, user~$2$ is interested in files $\mathcal{F}_2 = \{W_{1, 123}, W_{2, 123}, W_{1, 124}, W_{2, 124}, W_{1, 234}, W_{2, 234}\}$, and so on. By calculating $|\mathcal{F}|/N = \alpha/K = 3/4$, we can verify that each user is interested in $\SI{75}{\percent}$ of the library, and each file is of interest to $\SI{75}{\percent}$ of the users.
\end{example}

The FDS structure automatically implies restrictions in the set of possible demand vectors. For instance, going back to~\Cref{ex: Example of FDS Structure}, any demand with $\bm{d} = (234, 123, 123, 234)$ is not valid, because $\{1\} \notin \mathcal{D}_1 = \{2, 3, 4\}$, i.e., because file $W_{f_1, 234}$ is not in $\mathcal{F}_1$ and thus would never be demanded by user~$1$. On the other hand, any demand with $\bm{d} = (124, 123, 123, 234)$ is valid because $k \in \mathcal{D}_k$ for each $k \in [K]$. 

The set of valid demands as well as placement constraints that define selfish coded caching are now stated below. 

\begin{definition}[Selfish Coded Caching with Uncoded Placement]\label{def: Uncoded and Selfish Cache Placement Definition}
In selfish coded caching, a demand defined by the vectors $\bm{d} = (\mathcal{D}_1, \dots, \mathcal{D}_K)$ and $\bm{f} = (f_1, \dots, f_K)$ is said to be \emph{valid} if and only if
  \begin{equation}
      k \in \mathcal{D}_k, \quad \forall k \in [K]
  \end{equation}
while a cache placement is \emph{selfish} when it guarantees that a subfile of $W_{i, \mathcal{S}}$ can be cached at user $k$ only if $k \in \mathcal{S}$.
\end{definition}

\subsection{Understanding the Dynamics of Selfish Coded Caching with an Example for the \texorpdfstring{$(K, \alpha, f) = (5, 4, 1)$}{(K, alpha, f) = (5, 4, 1)} Structure}\label{sec: Motivating Example}
Let us consider a small motivating example that can help the reader appreciate the dynamics of symmetrically selfish coded caching. We will first suggest a selfish cache placement scheme that will be justified in~\Cref{sec: The Scheme for alpha-Demands}, and we will then present the delivery and decoding process for a class of valid circular demands. The corresponding load that will be achieved here will in fact be matched by the converse of the next section, thus proving that in our example our delivery is optimal and the converse tight.

We here consider the $(K, \alpha, f) = (5, 4, 1)$ scenario, where each cache is of size $M = 2$ corresponding to the case of $t = 2$. In our scenario there are $C = \binom{K}{\alpha} = 5$ file classes $\mathcal{W}_{1234}, \mathcal{W}_{1235}, \mathcal{W}_{1245}, \mathcal{W}_{1345}, \mathcal{W}_{2345}$, and a total of $N = fC = 5$ library files. For simplicity, we will exploit the fact that $f = 1$ by slightly abusing notation such that, in this early example only, the library of $N = 5$ files will be denoted as $\mathcal{L} = \{W_{1234}, W_{1235}, W_{1245}, W_{1345}, W_{2345}\}$. At this point, the $5$ FDSs take the form
\begin{align}
    \mathcal{F}_{1} &= \{W_{1234}, W_{1235}, W_{1245}, W_{1345}\}\\
    \mathcal{F}_{2} &= \{W_{1234}, W_{1235}, W_{1245}, W_{2345}\}\\
    \mathcal{F}_{3} &= \{W_{1234}, W_{1235}, W_{1345}, W_{2345}\}\\
    \mathcal{F}_{4} &= \{W_{1234}, W_{1245}, W_{1345}, W_{2345}\}\\
    \mathcal{F}_{5} &= \{W_{1235}, W_{1245}, W_{1345}, W_{2345}\}.
\end{align}

\subsubsection{Placement}

The cache placement will follow a selfish adaptation of the MAN scheme. First each file is split into $\binom{\alpha}{t} = \binom{4}{2} = 6$ non-overlapping subfiles as
\begin{align}
    W_{1234} & = \{W_{1234, 12}, W_{1234, 13}, W_{1234, 14}, W_{1234, 23}, W_{1234, 24}, W_{1234, 34}\}\\
    W_{1235} & = \{W_{1235, 12}, W_{1235, 13}, W_{1235, 15}, W_{1235, 23}, W_{1235, 25}, W_{1235, 35}\}\\
    W_{1245} & = \{W_{1245, 12}, W_{1245, 14}, W_{1245, 15}, W_{1245, 24}, W_{1245, 25}, W_{1245, 45}\}\\
    W_{1345} & = \{W_{1345, 13}, W_{1345, 14}, W_{1345, 15}, W_{1345, 34}, W_{1345, 35}, W_{1345, 45}\}\\
    W_{2345} & = \{W_{2345, 23}, W_{2345, 24}, W_{2345, 25}, W_{2345, 34}, W_{2345, 35}, W_{2345, 45}\}
\end{align} 
and then the cache $\mathcal{Z}_{k}$ of each user $k \in [5]$ is filled as 
\begin{equation}
  \mathcal{Z}_{k} = \{W_{\mathcal{S}, \mathcal{T}}: \mathcal{S} \subseteq [5], |\mathcal{S}| = 4, \mathcal{T} \subseteq \mathcal{S}, |\mathcal{T}| = 2, k \in \mathcal{S} \cap \mathcal{T}\}.
\end{equation}
For example, user~$1$ would have to cache parts only from files $\{W_{1234}, W_{1235}, W_{1245}, W_{1345}\}$ in order to abide by the selfish constraint, and then, to abide by the cache size constraint, user~$1$ would cache subfiles labeled by $\{12,13,14\}$. Similarly, user~$2$ would cache only from $\{W_{1234}, W_{1235}, W_{1245}, W_{2345}\}$, and only the subfiles labeled by $\{12,23,24\}$, and so on.

\subsubsection{Delivery}

The delivery takes place as soon as the requests of the users are revealed. Consider the demand $\bm{d}_1 = (1234, 2345, 1345, 1245, 1235)$. A schematic of this demand is given by means of the graph in~\Cref{fig: FDS Request Graph for Circular Demand}. This graph, which we refer to as the \emph{FDS request graph}, is a directed graph where each vertex is a user and where there is an edge from user $k_1$ to user $k_2$ if $W_{\mathcal{D}_{k_1}} \in \mathcal{F}_{k_2}$. This graph represents at a high level, for each given demand vector $\bm{d}$, the interplay between the users' interests.
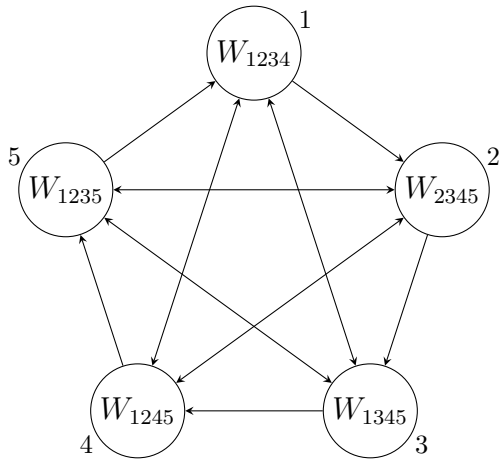
\begin{figure}[!htb]
\centering
\begin{tikzpicture}
  \node[regular polygon, regular polygon sides = 5, inner sep = 1.5cm](G){};
  \node[draw, circle, label = center:$W_{1234}$, minimum size = 1.25cm](1) at (G.corner 1){};
  \node[draw, circle, label = center:$W_{1235}$, minimum size = 1.25cm](5) at (G.corner 2){};
  \node[draw, circle, label = center:$W_{1245}$, minimum size = 1.25cm](4) at (G.corner 3){};
  \node[draw, circle, label = center:$W_{1345}$, minimum size = 1.25cm](3) at (G.corner 4){};
  \node[draw, circle, label = center:$W_{2345}$, minimum size = 1.25cm](2) at (G.corner 5){};
  \node at (1.north east)[font = \small, anchor = west]{$1$};
  \node at (2.north east)[font = \small, anchor = west]{$2$};
  \node at (3.south east)[font = \small, anchor = west]{$3$};
  \node at (4.south west)[font = \small, anchor = east]{$4$};
  \node at (5.north west)[font = \small, anchor = east]{$5$};
  \draw[-stealth](1)--(2); \draw[-stealth](2)--(3); \draw[-stealth](3)--(4); \draw[-stealth](4)--(5); \draw[-stealth](5)--(1);
  \draw[stealth-stealth](1)--(4); \draw[stealth-stealth](2)--(5); \draw[stealth-stealth](3)--(1); \draw[stealth-stealth](4)--(2); \draw[stealth-stealth](5)--(3);
\end{tikzpicture}
\caption{FDS request graph for the $(K, \alpha, f) = (5, 4, 1)$ FDS structure and the demand $\bm{d}_1 = (1234, 2345, 1345, 1245, 1235)$.}
\label{fig: FDS Request Graph for Circular Demand}
\end{figure}

As a consequence of the aforementioned cache placement, each user does not cache (and consequently desires) a total of $\binom{\alpha - 1}{t} = \binom{3}{2} = 3$ subfiles for its demanded file. Hence, given the demand $\bm{d}_1$, the desired subfiles are given as follows.
\begin{itemize}
    \item User~$1$ desires the subfiles $W_{1234, 23}$, $W_{1234, 24}$ and $W_{1234, 34}$.
    \item User~$2$ desires the subfiles $W_{2345, 34}$, $W_{2345, 35}$ and $W_{2345, 45}$.
    \item User~$3$ desires the subfiles $W_{1345, 14}$, $W_{1345, 15}$ and $W_{1345, 45}$.
    \item User~$4$ desires the subfiles $W_{1245, 12}$, $W_{1245, 15}$ and $W_{1245, 25}$.
    \item User~$5$ desires the subfiles $W_{1235, 12}$, $W_{1235, 13}$ and $W_{1235, 23}$.
\end{itemize}
One key aspect for achieving optimality is the utilization of specifically structured linear combinations of multicast messages, where this structure accepts the following interesting interpretation. These linear combinations effectively allow multicast messages to be used not only to deliver desired content to users, but also to deliver undesired content that can be used as side information to ``bridge'' the gaps left by the selfish placement. In essence, each transmission now delivers desired content while also disseminating side information that can be used to create \emph{cliques}. To see this, let us consider the following sequence of XORs
\begin{align}
    X_1 & = W_{1345, 14} \oplus W_{1234, 24} \oplus W_{1245, 12} \\
    X_2 & = W_{2345, 35} \oplus W_{1235, 13} \oplus W_{1345, 15} \\
    X_3 & = W_{1345, 14} \oplus W_{2345, 35} \oplus W_{1234, 23}  \\
    X_4 & = W_{1234, 34} \oplus W_{1245, 15} \\
    X_5 & = W_{2345, 45} \oplus W_{1235, 12} \\
    X_6 & = W_{2345, 34} \oplus W_{1245, 25} \\
    X_7 & = W_{1345, 45} \oplus W_{1235, 23}
\end{align}
transmitted one after the other. Recalling that each file is split into $6$ non-overlapping subfiles, we know that each XOR has size $|X_i| = B/6$ for each $i \in [7]$.

By using its own cache, each user can now decode its own desired content as follows.
\begin{itemize}
    \item User~$1$ can recover its desired subfiles from $X_1$, $X_2 \oplus X_3$ and $X_4$.
    \item User~$2$ can recover its desired subfiles from $X_1 \oplus X_3$, $X_5$ and $X_6$.
    \item User~$3$ can recover its desired subfiles from $X_2$, $X_3$ and $X_7$.
    \item User~$4$ can recover its desired subfiles from $X_1$, $X_4$ and $X_6$.
    \item User~$5$ can recover its desired subfiles from $X_2$, $X_5$ and $X_7$.
\end{itemize}
For example, in the above, user~$1$ needs $W_{2345, 35}$ to correctly decode its desired $W_{1234, 23}$ from $X_3$, whereas user~$2$ needs $W_{1345, 14}$ to correctly decode $W_{2345, 35}$ always from $X_3$. The act of ``passing'' subfiles $W_{2345, 35}$ and $W_{1345, 14}$ to user~$1$ and user~$2$ with $X_2$ and $X_1$, respectively, allows the creation of a clique between user~$1$, user~$2$ and user~$3$. This clique is exploited by creating the XOR $X_3$. This interpretation related to the creation of cliques is a crucial part of the dynamics of the problem that we are considering.

The corresponding communication load is equal to $R(t = 2) = |X|/B = 7/6$, which will be met by the converse. 

\begin{figure}[!htb]
\centering
\begin{tikzpicture}
  \node[regular polygon, regular polygon sides = 5, inner sep = 1.5cm](G){};
  \node[draw, circle, label = center:$W_{1234}$, minimum size = 1.25cm](1) at (G.corner 1){};
  \node[draw, circle, label = center:$W_{1345}$, minimum size = 1.25cm](5) at (G.corner 2){};
  \node[draw, circle, label = center:$W_{1245}$, minimum size = 1.25cm](4) at (G.corner 3){};
  \node[draw, circle, label = center:$W_{1235}$, minimum size = 1.25cm](3) at (G.corner 4){};
  \node[draw, circle, label = center:$W_{2345}$, minimum size = 1.25cm](2) at (G.corner 5){};
  \node at (1.north east)[font = \small, anchor = west]{$1$};
  \node at (2.north east)[font = \small, anchor = west]{$2$};
  \node at (3.south east)[font = \small, anchor = west]{$3$};
  \node at (4.south west)[font = \small, anchor = east]{$4$};
  \node at (5.north west)[font = \small, anchor = east]{$5$};
  \draw[-stealth](1)--(2); \draw[-stealth](2)--(5); \draw[-stealth](5)--(1);
  \draw[stealth-stealth](1)--(3); \draw[stealth-stealth](2)--(3); \draw[stealth-stealth](1)--(4); \draw[stealth-stealth](5)--(4); \draw[stealth-stealth](5)--(3); \draw[stealth-stealth](2)--(4);
\end{tikzpicture}
\caption{FDS request graph for the $(K, \alpha, f) = (5, 4, 1)$ FDS structure and the demand $\bm{d} = (1234, 2345, 1235, 1245, 1345)$.}
\label{fig: FDS Request Graph for Non-Circular Demand}
\end{figure}

Consider now another demand $\bm{d}_2 = (1234, 2345, 1235, 1245, 1345)$ with its corresponding FDS request graph shown in~\Cref{fig: FDS Request Graph for Non-Circular Demand}. Since the graphs in~\Cref{fig: FDS Request Graph for Circular Demand} and in~\Cref{fig: FDS Request Graph for Non-Circular Demand} are non-isomorphic\footnote{This can be concluded by noticing that the graph in~\Cref{fig: FDS Request Graph for Circular Demand} contains $5$ bidirectional edges, whereas the graph in~\Cref{fig: FDS Request Graph for Non-Circular Demand} has $6$ bidirectional edges.}, the demand $\bm{d}_2$ accepts a different delivery solution\footnote{Having two non-isomorphic problems here implies that the delivery for the second problem cannot be derived from that of the first problem by a simple relabeling of the users.} than that for demand $\bm{d}_1$. Such phenomenon does not happen in the standard coded caching scenario, where indeed each demand would result in the same FDS request graph (cf.~\Cref{fig: FDS Request Graph for Any Demand in Standard MAN}), which is always complete\footnote{A complete graph is a graph where every node is connected to every other node.}. In such an unselfish scenario where each file is assumed to be of interest to all users, every user in the FDS request graph is connected to every other user, independently of the requested files. Hence, in the unselfish scenario, having a fixed FDS request graph for every demand allows for an identical delivery procedure for any demand. This seems to be a crucial differentiating aspect between selfish and unselfish coded caching.

\begin{figure}[!htb]
\centering
\begin{tikzpicture}
  \node[regular polygon, regular polygon sides = 5, inner sep = 1.5cm](G){};
  \node[draw, circle, label = center:$W_{f_1}$, minimum size = 1.25cm](1) at (G.corner 1){};
  \node[draw, circle, label = center:$W_{f_5}$, minimum size = 1.25cm](5) at (G.corner 2){};
  \node[draw, circle, label = center:$W_{f_4}$, minimum size = 1.25cm](4) at (G.corner 3){};
  \node[draw, circle, label = center:$W_{f_3}$, minimum size = 1.25cm](3) at (G.corner 4){};
  \node[draw, circle, label = center:$W_{f_2}$, minimum size = 1.25cm](2) at (G.corner 5){};
  \node at (1.north east)[font = \small, anchor = west]{$1$};
  \node at (2.north east)[font = \small, anchor = west]{$2$};
  \node at (3.south east)[font = \small, anchor = west]{$3$};
  \node at (4.south west)[font = \small, anchor = east]{$4$};
  \node at (5.north west)[font = \small, anchor = east]{$5$};
  \draw[stealth-stealth](1)--(2); \draw[stealth-stealth](1)--(3); \draw[stealth-stealth](1)--(4); \draw[stealth-stealth](1)--(5);
  \draw[stealth-stealth](2)--(3); \draw[stealth-stealth](2)--(4); \draw[stealth-stealth](2)--(5);
  \draw[stealth-stealth](3)--(4); \draw[stealth-stealth](3)--(5);
  \draw[stealth-stealth](4)--(5);
\end{tikzpicture}
\caption{FDS request graph for any demand in the standard (unselfish) MAN scenario with $K = 5$ users and $N = 5$ files labeled as $W_i$ with $i \in [5]$. In this case the demand is identified by the vector $\bm{f} = (f_1, f_2, f_3, f_4, f_5)$, where user $k \in [5]$ requests file $W_{f_k}$. This graph is complete. Hence, here the ability to create cliques of subfiles is only limited by $t$, and is not affected at all by the specific demand.}
\label{fig: FDS Request Graph for Any Demand in Standard MAN}
\end{figure}
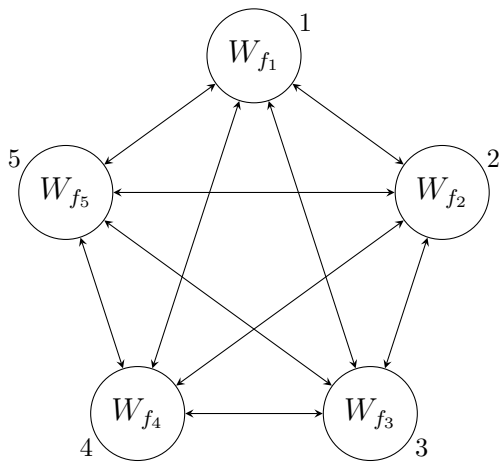

\section{Converse Bound for Selfish Coded Caching with Uncoded Placement}\label{sec: Converse Bound}

Let us recall that each user is interested in its own FDS, and that each FDS only represents a fraction $|\mathcal{F}|/N$ of the library. In the general unselfish scenario, a portion $(1 - |\mathcal{F}|/N)$ of each user's cache would be filled with content that would never be requested by that user. Such a non-selfish scheme would relinquish local caching gain for the benefit of being able to encode across all combinations of users. Under the basic clique-based approach in the MAN scheme, we are presented with a trade-off between local caching gain and coding gain, where the latter seems to be more desirable. Are there though other coding techniques that manage to harvest an abundance of coding opportunities, which are usually associated to the standard coded caching approach, exploiting the existence of a more targeted set of demands, while capitalizing on the increased local caching gain brought about by a selfish variant? If not, then what is the amount of coding gain that can be harvested while maintaining selfish caching? These are the questions addressed by our information-theoretic converse that lower bounds the optimal worst-case load assuming uncoded and selfish cache placement.

\subsection{Theorem Statement}

The converse bound employs the index coding techniques of~\cite{8963629} that proved the optimality of the MAN scheme under the constraint of uncoded cache placement. Our main challenge will be to account for the presence of different profiles of interest, adapting consequently the index coding approach to reflect the $(K, \alpha, f)$ FDS structure proposed in the previous section. The converse bound presented here shows that adding the selfish cache placement constraint implies a higher optimal communication load compared to the unselfish scenario. The result is stated in the following theorem. We recall that $\gamma_\alpha =  \gamma K/\alpha$ is the effective normalized cache size, and that $t = K\gamma = \alpha\gamma_\alpha$ is the cache redundancy. We also recall that $K\gamma+1$ is the optimal coding gain for the unselfish scenario.

\begin{theorem}[Converse Bound for Selfish Coded Caching under Uncoded Prefetching]\label{thm: Lower Bound for the FDS Structure}
  Under the assumption of uncoded and selfish cache placement, and given the $(K, \alpha, f)$ FDS structure, the optimal worst-case communication load $R^{\star}$ is lower bounded by $R_{\text{LB}}$ which is a piece-wise linear curve with corner points
  \begin{equation}
    (M, R_{\text{LB}}) = \left(t\frac{N}{K}, \frac{\binom{\alpha}{t + 1} + (K - \alpha)\binom{\alpha - 1}{t}}{\binom{\alpha}{t}}\right), \quad \forall t \in [0 : \alpha]
  \end{equation}
corresponding to 
  \begin{equation}
    R_{\text{LB}} = \frac{K(1 - \gamma_\alpha)}{K\gamma+1}\Big[(K - \alpha)\gamma+1\Big].
  \end{equation}
\end{theorem}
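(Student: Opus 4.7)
The plan is to extend the index-coding converse of~\cite{8963629} to the symmetric $(K,\alpha,f)$ FDS setting. Fix an arbitrary uncoded and selfish placement; each file $W_{i,\mathcal{S}}$ is split into subfiles $W_{i,\mathcal{S},\mathcal{T}}$ cached exactly by the users in $\mathcal{T}$, where selfishness forces $\mathcal{T}\subseteq\mathcal{S}$. For any valid demand $\bm{d}$ the delivery task becomes a standard index coding problem, and the acyclic-subgraph bound yields, for every permutation $\pi$ of $[K]$,
\begin{equation*}
R\,B \ \geq \ \sum_{k=1}^{K}\ \sum_{\mathcal{T}\subseteq\{\pi(1),\ldots,\pi(k-1)\}} |W_{d_{\pi(k)},\mathcal{T}}|,
\end{equation*}
since these subfiles are desired by user $\pi(k)$ but cached only by users appearing earlier in $\pi$, so no later multicast opportunity can absorb them.

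Next, I would choose a family of worst-case valid demands and a family of permutations so that, after averaging, the resulting coefficients depend only on $|\mathcal{T}|=t$. A natural candidate fixes an $\alpha$-subset $\mathcal{S}^\star\subseteq[K]$, lets the $\alpha$ users of $\mathcal{S}^\star$ request distinct files from the common class $\mathcal{W}_{\mathcal{S}^\star}$, and lets each outsider $k\notin\mathcal{S}^\star$ request a file from a class containing $k$. Averaging the inequality above over all orderings of the insiders (placed in the first $\alpha$ slots of $\pi$) and over all orderings of the outsiders produces two distinct contributions: an \emph{internal} sum which reproduces the MAN converse among the $\alpha$ users of $\mathcal{S}^\star$ and supplies $\binom{\alpha}{t+1}$ subfiles of normalized size $1/\binom{\alpha}{t}$, and an \emph{external} sum in which each of the $K-\alpha$ outsiders contributes $\binom{\alpha-1}{t}$ additional subfiles whose side information must lie entirely inside $\mathcal{S}^\star$ and exclude the outsider itself. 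Aggregating yields exactly the numerator $\binom{\alpha}{t+1}+(K-\alpha)\binom{\alpha-1}{t}$ appearing in the theorem.

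The averaged bound depends on the placement only through the aggregate subfile sizes $x_{\mathcal{S},\mathcal{T}}\coloneqq \sum_{i}|W_{i,\mathcal{S},\mathcal{T}}|$, so the converse reduces to a linear program: minimize the right-hand side over all non-negative $\{x_{\mathcal{S},\mathcal{T}}\}_{\mathcal{T}\subseteq\mathcal{S}}$ subject to the file-size constraint $\sum_{\mathcal{T}\subseteq\mathcal{S}} x_{\mathcal{S},\mathcal{T}} = fB$ and the per-user cache constraint $\sum_{\mathcal{S}\ni k}\sum_{\mathcal{T}\ni k} x_{\mathcal{S},\mathcal{T}} \leq MB$. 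Convexity together with the symmetry of the FDS structure imply that the LP minimum at $M=tN/K$ is attained at the symmetric allocation $x_{\mathcal{S},\mathcal{T}}=fB/\binom{\alpha}{t}$ for every $\mathcal{T}\subseteq\mathcal{S}$ with $|\mathcal{T}|=t$, which delivers exactly the corner $R_{\text{LB}}(t)$. Non-integer cache sizes then follow by memory-sharing between consecutive corners, producing the full piecewise-linear envelope over $t\in[0:\alpha]$.

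The technical heart of the argument --- and the main obstacle --- is the averaging step. Unlike in the unselfish MAN converse, where all distinct-file demands are symmetric and every permutation contributes homogeneously, here the FDS constraints $\mathcal{T}\subseteq\mathcal{S}$ and $k\in\mathcal{D}_k$ force a clean separation between ``insider'' permutations, which recover the $\alpha$-user MAN count, and ``outsider'' permutations, which contribute the additional $(K-\alpha)\binom{\alpha-1}{t}$ term. Verifying that these two contributions combine precisely into the claimed numerator, and that no other valid demand family yields a strictly tighter bound on the resulting LP, is where the bulk of the combinatorics sits.
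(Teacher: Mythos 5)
Your skeleton (uncoded selfish splitting, the acyclic-subgraph index-coding bound, averaging over demands and permutations, and a final LP solved by symmetry/convexity) is the same as the paper's, but your choice of demand family creates a genuine gap. You build the bound from configurations in which the $\alpha$ users of a fixed class $\mathcal{S}^\star$ request \emph{distinct} files from the single class $\mathcal{W}_{\mathcal{S}^\star}$. Such demands exist only when $f \geq \alpha$; the theorem is claimed for every $f \in \mathbb{Z}^{+}$, including $f = 1$ (the paper's $(5,4,1)$ and $(6,4,1)$ examples), where no $\alpha$ users can request distinct files from a class containing a single file. The paper avoids this by averaging over \emph{circular} demands, in which the class requested by user $\hat{u}_k$ is $\{\hat{u}_k,\dots,\hat{u}_{k+\alpha-1}\}$; distinct users then request files from distinct classes, so the construction is valid for every $f$. (The paper does use essentially your configuration, but only in \refappendix{app: Converse Proof of the Exact Worst-Case Load for alpha-Demands}, for the separate converse of \Cref{thm: The Exact Worst-Case Load for alpha-Demands}, which is restricted to $f \geq \alpha$ and to a \emph{fixed} placement.)

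Even for $f \geq \alpha$, the step you defer --- ``verifying that these two contributions combine precisely into the claimed numerator'' --- is where the proof actually lives, and it is not a routine check. A subfile $W_{i,\mathcal{S},\mathcal{T}}$ with $|\mathcal{T}| = t$ is counted once as an ``insider'' subfile (when $\mathcal{S}^\star = \mathcal{S}$) but $K-\alpha$ times as an ``outsider'' subfile (once for each $\mathcal{S}^\star$ of the form $(\mathcal{S}\setminus\{k\})\cup\{j\}$ with $j \notin \mathcal{S}$), and the two contributions enter the average with multiplicities determined by how many demands and permutations of each type you include; nothing in your argument shows these weights normalize to exactly $\binom{\alpha}{t+1}/\binom{\alpha}{t}$ plus $(K-\alpha)\binom{\alpha-1}{t}/\binom{\alpha}{t}$ rather than to some other mixture (in the paper the analogous bookkeeping is Lemma~\ref{lem: Circular Shift Lemma} together with the hockey-stick identity). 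Two smaller points: the final Jensen step requires the aggregated coefficient $f(t)$ to be convex \emph{and decreasing} in $t$ (the paper's Lemma~\ref{lem: Strictly Decreasing Sequence}), which you assert only implicitly; and your closing worry that ``no other valid demand family yields a strictly tighter bound'' is unnecessary for a converse --- any valid family gives a valid lower bound --- so that is not something you need to verify.
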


\begin{proof}
  We provide the proof of the converse in~\Cref{sec: Proof of the Lower Bound}. In~\Cref{sec: An Exhaustive Example for the Converse Bound} we also present an example that aims to help the reader better understand the construction of the outer bound.
\end{proof}

\subsection{Comments on the Converse Bound}

The bound reveals some interesting insights. Before discussing these insights, let us quickly recall that, in our scenario, the integer value $t$ is upper bounded by $\alpha$, since any $t \geq \alpha$ would imply zero communication load. 

\subsubsection{Comparison with MAN}

The following compares, for any $f$, the optimal load $R^{\star}(t)$ of selfish coded caching with that of the unselfish (MAN) scenario\footnote{The comparison between the selfish and unselfish scenarios is made easy by the fact that the $t$ values (i.e., the integer points corresponding to the memory-axis of the memory-load trade-off) in the two scenarios coincide.}.

\begin{corollary}\label{cor: Comparison between Converse Bound and MAN Scheme}
Given the symmetric $(K, \alpha, f)$ FDS structure and $\alpha \in [K - 1]$, the converse reveals that 
\begin{align}
    \frac{R^{\star}(t)}{R_{\text{MAN}}(t)} & \geq 1,  \quad \forall t \in [0 : \alpha - 1] \\
    \frac{R^{\star}(t)}{R_{\text{MAN}}(t)} & > 1,  \quad \forall t \in (0 : \alpha - 1) 
\end{align}
which says that in the non-trivial range $t \in [0 : \alpha - 1]$, selfish coded caching is not better than unselfish coded caching, while in the non-extremal points of $t$ and under uncoded placement optimal unselfish coded caching strictly outperforms any implementation of selfish coded caching. When $\alpha = K$ and $f \geq K$, the converse expression naturally matches that of unselfish coded caching.
\end{corollary}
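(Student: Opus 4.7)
The plan is to reduce the statement to a one-variable algebraic inequality and then factor it so that the sign is transparent. First I would substitute $\gamma = t/K$ and $\gamma_\alpha = t/\alpha$ into the converse expression to rewrite
\begin{equation}
    R_{\text{LB}}(t) = \frac{(\alpha - t)\bigl[(K-\alpha)t + K\bigr]}{\alpha(t+1)},
\end{equation}
using $(1-\gamma_\alpha)K/(K\gamma+1) = K(\alpha-t)/[\alpha(t+1)]$ and $(K-\alpha)\gamma + 1 = [(K-\alpha)t + K]/K$. Since $R_{\text{MAN}}(t) = (K-t)/(t+1)$, the ratio to be bounded becomes
\begin{equation}
    \frac{R_{\text{LB}}(t)}{R_{\text{MAN}}(t)} = \frac{(\alpha - t)\bigl[(K-\alpha)t + K\bigr]}{\alpha(K-t)}.
\end{equation}
This is the clean object I want to analyze.

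Next, I would form the numerator-minus-denominator $(\alpha - t)[(K-\alpha)t+K] - \alpha(K-t)$ and expand; the key step is to observe that the $K$ and $\alpha$ terms that do not involve a factor of $(K-\alpha)$ cancel, leaving
\begin{equation}
    (\alpha - t)\bigl[(K-\alpha)t + K\bigr] - \alpha(K-t) = (K-\alpha)\, t\, (\alpha - t - 1).
\end{equation}
I would then read off the signs: for $\alpha \in [K-1]$ the factor $(K-\alpha)$ is strictly positive, $t \geq 0$ on the range of interest, and $\alpha - t - 1 \geq 0$ precisely when $t \leq \alpha - 1$. Together this gives $R_{\text{LB}}(t) \geq R_{\text{MAN}}(t)$ on $t \in [0:\alpha-1]$, with equality exactly when $t \in \{0, \alpha-1\}$, which yields the strict inequality on the non-extremal integer points $t \in (0:\alpha-1)$. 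Finally, for the $\alpha = K$ case I would plug in directly: then $\gamma_\alpha = \gamma$ and $(K-\alpha)\gamma + 1 = 1$, so $R_{\text{LB}}$ collapses to $K(1-\gamma)/(K\gamma+1) = R_{\text{MAN}}$, matching unselfish coded caching (with $f \geq K$ ensuring the library contains enough distinct files for a worst-case demand).

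There is no real obstacle here since the claim is a direct consequence of \Cref{thm: Lower Bound for the FDS Structure}; the only mildly delicate point is recognizing that the two apparent equalities at $t=0$ and $t=\alpha-1$ are genuine and not artifacts of the bound being loose, which is why the strict inequality is stated only on the open range. I would therefore present the factorization as the central identity of the proof, since it simultaneously establishes the inequality and pinpoints the equality cases, and then verify the two boundary cases ($t=0$, $t=\alpha-1$, and the $\alpha=K$ degeneracy) by direct substitution.
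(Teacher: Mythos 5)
Your proposal is correct and follows essentially the same route as the paper: the paper also reduces the ratio $R_{\text{LB}}(t)/R_{\text{MAN}}(t)$ to $\frac{(\alpha-t)\left[K(1+t)-\alpha t\right]}{\alpha(K-t)} = 1 + \frac{t(K-\alpha)(\alpha-1-t)}{\alpha(K-t)}$, which is exactly your factorization $(K-\alpha)\,t\,(\alpha-t-1)$ of the numerator-minus-denominator, and reads off the signs and the equality cases $t \in \{0, \alpha-1\}$ and $\alpha = K$ in the same way. The only cosmetic difference is that you start from the $\gamma$-form of the converse while the paper starts from the binomial-coefficient form; the resulting algebra is identical.
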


\begin{proof}
  The proof can be found in~\refappendix{app: Proof of Comparison with MAN}, while a graphical comparison can be found in~\Cref{fig: Comparison between the Converse Bound and the MAN Scheme}.
\end{proof}

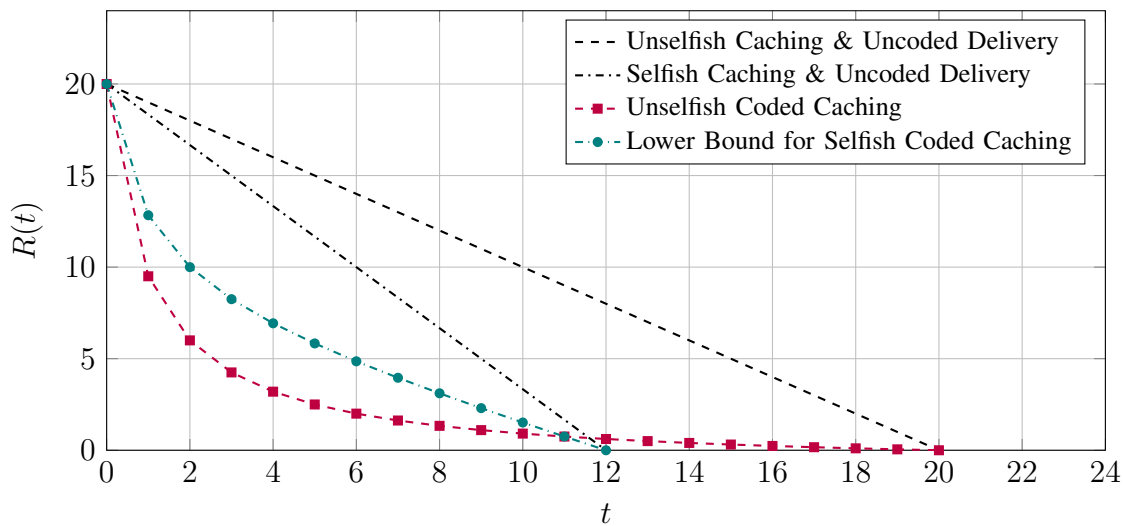
\begin{figure}[!htb]
  \centering
  \def\K{20}
  \def\a{12}
  \begin{tikzpicture}[ 
    declare function = {
      binom(\n,\k) = (\n >= \k)*(\n!)/(\k!*(\n-\k)!);
      MAN(\K, \t) = binom(\K, \t + 1)/binom(\K, \t);
      LB(\K, \a, \t) = (binom(\a, \t + 1) + (\K - \a)*binom(\a - 1, \t))/binom(\a, \t);
    }]
    \begin{axis}[xlabel = $t$, ylabel = $R(t)$, grid = major, enlargelimits = {value = 0.2, upper}, legend cell align = {left}, legend style = {font = \small},
      width = 0.9\linewidth, height = 0.45\textwidth]
      \addplot+[variable = t, samples at = {0, 1, ..., \K}, mark = none, thick, black, dashed]{\K - t};
      \addlegendentry{Unselfish Caching \& Uncoded Delivery};
      \addplot+[variable = t, samples at = {0, 1, ..., \a}, mark = none, thick, black, dash dot]{\K - \K*t/\a};
      \addlegendentry{Selfish Caching \& Uncoded Delivery};
      \addplot+[variable = t, samples at = {0, 1, ..., \K}, mark = square*, mark size = 1.5, mark options = {solid}, dashed, thick, purple] {MAN(\K, t)};
      \addlegendentry{Unselfish Coded Caching};
      \addplot+[variable = t, samples at = {0, 1, ..., \a}, mark = *, mark size = 1.5, mark options = {solid}, dash dot, thick, teal] {LB(\K, \a, t)};
      \addlegendentry{Lower Bound for Selfish Coded Caching};
    \end{axis}
  \end{tikzpicture}
  \caption{Comparison between selfish and unselfish caching for the $(20, 12, f)$ FDS structure.}
  \label{fig: Comparison between the Converse Bound and the MAN Scheme}
\end{figure}

\subsubsection{Selfish Local Caching Gain and Coding Gain}

We recall that, in the presence of a relatively small $\alpha$, selfish caching implies a sizeable increase in the effective normalized cache size $\gamma_{\alpha} = \gamma K/\alpha$, which in turn implies a much larger local caching gain. 

On the other hand, the converse reveals that a smaller $\alpha$ implies a substantial reduction in the coding gain offered by selfish coded caching. To compare coding gains, we first recall that the coding gain in the original unselfish scenario takes the form
\begin{equation}
    \frac{R_{\text{U}}}{R_{\text{MAN}}} = K\gamma + 1,
\end{equation}
where $R_{\text{U}} = K(1 - \gamma)$ is the load for uncoded delivery. As previously stated, this coding gain $K\gamma + 1$ describes the speedup factor over the uncoded case. To reflect this same speedup in the selfish scenario, we must consider that the corresponding load in the uncoded scenario takes the form $R_{\text{U}, \text{selfish}} = K(1 - \gamma_{\alpha})$. With this in place, the converse reveals that the optimal coding gain of selfish coded caching is upper bounded as
\begin{equation}
  G^\star \leq \frac{R_{\text{U}, \text{selfish}}}{R_{\text{LB}}} = \frac{K\gamma + 1}{(K - \alpha)\gamma +1}
\end{equation}
where the value
\begin{equation}D \coloneqq (K - \alpha)\gamma +1\end{equation} 
represents the guaranteed deterioration in the coding gain when we choose to cache selfishly. Indeed, if we consider the non-trivial range $\alpha \in [2 : K - 1]$, we have $D > 1$ and consequently $G < K\gamma + 1$ for $\gamma > 0$. We can see that --- for fixed $K$ and $\gamma$ --- this deterioration $D$ increases with decreasing $\alpha$, reflecting the fact that the closer the $(K, \alpha, f)$ FDS structure is to the standard MAN scenario, the smaller this deterioration $D$ is.

An important observation though is that the coding gain of selfish coded caching does not scale with $K$. This is described in the following corollary.

\begin{corollary}\label{cor: Bound on the Coding Gain}
    For any fixed ratio $\delta = \alpha/K < 1$ the coding gain of selfish caching does not scale as $K$ increases, and it is instead bounded as 
    \begin{equation}
        G^\star < \frac{1}{1 - \delta}.
    \end{equation}
\end{corollary}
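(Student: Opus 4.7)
The plan is to derive the bound directly from the upper bound on $G^\star$ already established in the discussion preceding the corollary, namely
\begin{equation}
G^\star \leq \frac{K\gamma + 1}{(K - \alpha)\gamma + 1}.
\end{equation}
The strategy is purely algebraic: rewrite the right-hand side in a form that makes its supremum transparent, and then observe that the supremum is attained only in a limit that is never realized for valid parameters.

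First, I would split the numerator using $K = \alpha + (K - \alpha)$ to obtain
\begin{equation}
\frac{K\gamma + 1}{(K - \alpha)\gamma + 1}
= \frac{\alpha\gamma + (K - \alpha)\gamma + 1}{(K - \alpha)\gamma + 1}
= 1 + \frac{\alpha\gamma}{(K - \alpha)\gamma + 1}.
\end{equation}
Since $\delta < 1$ guarantees $K - \alpha > 0$, the additive ``$+1$'' in the denominator is strictly positive, so for every $\gamma > 0$ we have the strict inequality
\begin{equation}
\frac{\alpha\gamma}{(K - \alpha)\gamma + 1} < \frac{\alpha\gamma}{(K - \alpha)\gamma} = \frac{\alpha}{K - \alpha}.
\end{equation}
Combining gives $G^\star < 1 + \alpha/(K - \alpha) = K/(K - \alpha) = 1/(1 - \delta)$, as desired. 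The trivial endpoint $\gamma = 0$ is handled separately: there the coding gain equals one, and $1 < 1/(1 - \delta)$ since $\delta > 0$ whenever $\alpha \geq 1$.

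Finally, I would close by pointing out the qualitative consequence: the derived upper bound depends only on $\delta$ and not on $K$, so as $K$ grows with $\delta$ held constant, the optimal selfish coding gain remains trapped below the constant $1/(1 - \delta)$. There is no real obstacle here — the only subtlety is preserving the strictness of the inequality, which is secured by the observation that $(K - \alpha)\gamma + 1 > (K - \alpha)\gamma$ is strict for all finite $\gamma$, so the limiting value $1/(1 - \delta)$ is approached but never attained within the admissible range $t \in [0 : \alpha]$.
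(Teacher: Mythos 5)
Your proof is correct, and it reaches the bound by a somewhat more elementary route than the paper. Both arguments start from the identical inequality $G^\star \leq (K\gamma + 1)/\bigl((K - \alpha)\gamma + 1\bigr)$, but the paper then treats the right-hand side as a function $\bar{G}(t) = (t+1)/\bigl(1 + \tfrac{t}{K}(K-\alpha)\bigr)$ of the cache redundancy $t$, checks $\bar{G}''(t) < 0$ to establish concavity, and concludes from $\bar{G}(0) = 1$ together with $\lim_{t \to \infty}\bar{G}(t) = K/(K-\alpha)$ that $\bar{G}(t) < K/(K-\alpha)$ throughout. Your decomposition
\begin{equation}
\frac{K\gamma + 1}{(K - \alpha)\gamma + 1} = 1 + \frac{\alpha\gamma}{(K - \alpha)\gamma + 1} < 1 + \frac{\alpha}{K - \alpha} = \frac{1}{1 - \delta}
\end{equation}
replaces the calculus with a single algebraic step, makes the strictness of the inequality immediate (the ``$+1$'' in the denominator is the entire reason), and handles $\gamma = 0$ trivially. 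What the paper's formulation buys in exchange is the qualitative picture that $\bar{G}(t)$ is concave and increasing toward its horizontal asymptote $K/(K-\alpha)$, i.e., that the bound is approached monotonically as the memory grows --- information your argument also notes in passing but does not need. Either proof is complete; yours is arguably the cleaner one for establishing the stated inequality.
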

\begin{proof}
  The proof can be found in~\refappendix{app: Proof of the Bound on the Coding Gain}.
\end{proof}

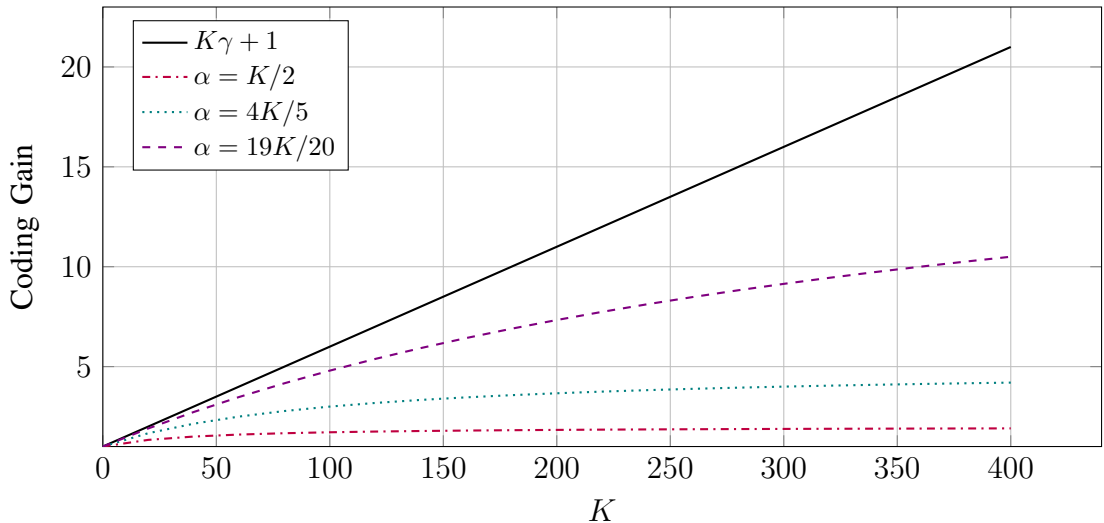
\begin{figure}[!htb]
  \centering
  \def\gam{1/20}
  \begin{tikzpicture}[ 
    declare function = {
      MAN_G(\K,\gamma) = 1 + \K*\gamma;
      G(\K,\a,\gamma) = (1 + \K*\gamma)/(1 + \gamma*(\K - \a));
    }]
    \begin{axis}[xlabel = $K$, ylabel = {Coding Gain}, grid = major, enlargelimits = {value = 0.1, upper}, legend pos = {north west}, legend cell align = {left}, legend style = {font = \small},
      width = 0.9\linewidth, height = 0.45\textwidth]
      \addplot+[variable = K, samples at = {0, 20, 40, ..., 400}, mark = none, thick, black]{MAN_G(K,\gam)};
      \addlegendentry{$K{\gamma} + 1$};
      \addplot+[variable = K, samples at = {0, 20, 40, ..., 400}, mark = none, thick, purple, dash dot]{G(K, K/2,\gam)};
      \addlegendentry{$\alpha = K/2$};
      \addplot+[variable = K, samples at = {0, 20, 40, ..., 400}, mark = none, thick, teal, dotted]{G(K, 4/5*K,\gam)};
      \addlegendentry{$\alpha = 4K/5$};
      \addplot+[variable = K, samples at = {0, 20, 40, ..., 400}, mark = none, thick, violet, dashed]{G(K, 19/20*K,\gam)};
      \addlegendentry{$\alpha = 19K/20$};
    \end{axis}
  \end{tikzpicture}
  \caption{Plot of different coding gains $G$ for varying values of $K > 20$ and $\alpha$ for the $(K, \alpha, f)$ FDS structure when $\gamma = 1/20$.}
  \label{fig: Comparison of Coding Gains}
\end{figure}

We can see in \Cref{fig: Comparison of Coding Gains} the comparison between different coding gains for varying values of $K$ and $\alpha$ when the normalized cache size $\gamma$ is fixed. As mentioned, smaller values of $\alpha$ correspond to much smaller coding gains. As stated in \Cref{cor: Bound on the Coding Gain}, each curve is upper bounded\footnote{When $\alpha = 1$ it naturally holds that $G^\star = 1$, since in such case uncoded delivery is optimal.} by $1/(1 - \delta)$.

\begin{remark}
At this point, we ought to point out that our choice of having a fully symmetric FDS structure may indeed be an overly penalizing condition. However, this choice exemplifies the mechanisms and effects that come about when selfishness is considered. This same choice nicely offers a crisp method for calibrating the intersection between the interests of the different users, taking us from a scenario where the intersection is minimal, to scenarios ever closer to the original MAN setting where the interests are identical.
\end{remark}

\section{Proof of the Main Information-Theoretic Converse in~\texorpdfstring{\Cref{thm: Lower Bound for the FDS Structure}}{Theorem \ref{thm: Lower Bound for the FDS Structure}}} \label{sec: Proof Converse Bound}

The derivation of the converse makes extensive use of the connection between caching and index coding. This connection was made in~\cite{6763007} and was successfully used in \cite{8963629} to derive the optimal performance of the unselfish scenario. 

We quickly recall that an index coding problem \cite{5714242, CIT-094, 7839236, 8272005} consists of a server wishing to deliver $N'$ independent messages to $K'$ users via a basic bottleneck link. Each user $k \in [K']$ has its own \emph{desired message set} $\mathcal{M}_{k} \subseteq [N']$, and has knowledge of its own \emph{side information set} $\mathcal{A}_{k} \subseteq [N']$. Let $M_i$ be the message $i$ in the set $[N']$. Then, the index coding problem is typically described by its \emph{side information graph} in the form of a directed graph, where each vertex is a message and where there is an edge from $M_i$ to $M_j$ if $M_i$ is in the side information set of the user requesting $M_j$. 
The derivation of our converse will use the following well-known result from~\cite[Corollary 1]{6620369}. 
\begin{lemma}[{\cite[Corollary 1]{6620369}}]\label{thm: Acyclic Subgraph Converse Bound}
  In an index coding problem with $N'$ messages $M_{i}$ for $i \in [N']$, the minimum number of transmitted bits $\rho$ is bounded as
  \begin{equation}
   \rho \geq \sum_{i \in \mathcal{J}} |M_{i}| 
  \end{equation}
for any acyclic subgraph $\mathcal{J}$ of the problem's side information graph.
\end{lemma}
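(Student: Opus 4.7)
The plan is to convert the combinatorial acyclicity hypothesis into a sequential decoding chain, and then conclude by a standard entropy/cut-set argument exploiting the mutual independence of the messages.

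First I would fix an acyclic induced subgraph $\mathcal{J}$ of the side information graph and invoke the fact that any finite acyclic directed graph admits a topological ordering. Concretely, I would write $\mathcal{J} = \{v_{1}, v_{2}, \dots, v_{J}\}$ ordered so that every edge of the induced subgraph points from a smaller-index vertex to a larger-index vertex. The key structural consequence I would extract is that, for each $k \in [J]$, the side information set of the user demanding message $M_{v_{k}}$ can contain, \emph{within} $\mathcal{J}$, only messages among $\{M_{v_{1}}, \dots, M_{v_{k - 1}}\}$; otherwise an edge would go ``backwards'' in the ordering and violate acyclicity.

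Next I would construct a hypothetical super-decoder that observes the transmitted codeword $X$ and is additionally granted, as genie side information, every message outside $\mathcal{J}$, i.e.\ $\{M_{i} : i \in [N'] \setminus \mathcal{J}\}$. Proceeding by induction on $k$, at step $k$ this super-decoder already knows $M_{v_{1}}, \dots, M_{v_{k - 1}}$ together with all messages outside $\mathcal{J}$; hence it possesses a superset of the actual side information of the user demanding $M_{v_{k}}$, and so by running that user's decoding function on $X$ it recovers $M_{v_{k}}$. Iterating yields that $(X, \{M_{i} : i \notin \mathcal{J}\})$ determines $\{M_{i} : i \in \mathcal{J}\}$.

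Finally I would close the argument with a short chain of entropy inequalities: since $\rho$ bounds the length in bits of $X$,
\begin{equation}
\rho \;\geq\; H(X) \;\geq\; H\bigl(X \,\big|\, \{M_{i} : i \notin \mathcal{J}\}\bigr) \;\geq\; H\bigl(\{M_{i} : i \in \mathcal{J}\} \,\big|\, \{M_{i} : i \notin \mathcal{J}\}\bigr) \;=\; \sum_{i \in \mathcal{J}} |M_{i}|,
\end{equation}
where the penultimate step uses the decodability established above and the final equality uses the mutual independence of the messages $\{M_{i}\}_{i \in [N']}$ together with $H(M_{i}) = |M_{i}|$ for uniform independent messages. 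The main obstacle, and the place where the acyclicity hypothesis is truly used, is the inductive decoding step: one must verify that the topological ordering makes each user's actual in-$\mathcal{J}$ side information a subset of the ``already decoded'' prefix, so that the super-decoder never needs a message it has not yet reconstructed. Once that point is set up carefully, the remaining information-theoretic inequalities are routine.
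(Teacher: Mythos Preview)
Your proof is correct and is precisely the classical argument for the acyclic index coding bound: topologically order the acyclic set, give a genie decoder all messages outside $\mathcal{J}$, peel off the messages in $\mathcal{J}$ one by one, and finish with the cut-set/entropy chain. The only point worth tightening is the penultimate inequality in your display: it follows because decodability gives $H(\{M_i : i \in \mathcal{J}\} \mid X, \{M_i : i \notin \mathcal{J}\}) = 0$, so $H(X \mid \{M_i : i \notin \mathcal{J}\}) \geq I(X;\{M_i : i \in \mathcal{J}\} \mid \{M_i : i \notin \mathcal{J}\}) = H(\{M_i : i \in \mathcal{J}\} \mid \{M_i : i \notin \mathcal{J}\})$; stating this explicitly removes any ambiguity.

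As for comparison with the paper: the paper does \emph{not} supply its own proof of this lemma. It is quoted verbatim as a known result from \cite[Corollary~1]{6620369} and then used as a black box in the derivation of the main converse. So there is nothing to compare against; your argument is the standard one underlying that cited corollary.
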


Before proceeding with the main proof, we also recall that under the $(K, \alpha, f)$ FDS structure we have $\mathcal{L} = \{\mathcal{W}_{\mathcal{S}} : \mathcal{S} \subseteq [K], |\mathcal{S}| = \alpha \}$, where $\mathcal{W}_{\mathcal{S}} = \{W_{i, \mathcal{S}} : i \in [f]\}$ is a class of files. We further recall that there are $C = \binom{K}{\alpha}$ classes of files and $N = fC = f\binom{K}{\alpha}$ files. Additionally, we recall that the FDS of each user $k \in [K]$ is given by
\begin{equation}
  \mathcal{F}_{k} = \{\mathcal{W}_{S} : \mathcal{S} \subseteq [K], |\mathcal{S}| = \alpha, k \in \mathcal{S}\}
\end{equation}
that each file has size $B$ bits, and that each user is equipped with a cache of size $MB$ bits. Finally, let us remember that we are interested in the non-trivial range\footnote{When $\alpha = 1$ the proof is trivial, since for such case we have only two integer points corresponding to $t \in \{0, 1\}$: when $t = 0$ the load is equal to $K$, and when $t = 1$ each user has enough memory to cache entirely its own FDS and the load is equal to $0$. Then, the case $\alpha = K$ and $f \geq K$ is equivalent to the standard (unselfish) MAN scenario, which was already considered in~\cite{8963629}.} $\alpha \in [2 : K - 1]$ and in the range $M \in \left[0 : f\binom{K - 1}{\alpha - 1}\right]$ simply because having $M = |\mathcal{F}_k| = f\binom{K - 1}{\alpha - 1}$ implies $R^\star\left(|\mathcal{F}|\right) = 0$ as a consequence of each user being able to store the entirety of its FDS.

\subsection{Main Proof}\label{sec: Proof of the Lower Bound}

The first step toward the converse consists of splitting each file in a generic manner into a maximum of $2^{|\mathcal{S}|} = 2^{\alpha}$ disjoint subfiles as
\begin{equation}
  W_{i, \mathcal{S}} = \{W_{i, \mathcal{S}, \mathcal{T}} : \mathcal{T} \subseteq \mathcal{S}\}, \quad \forall \mathcal{S} \subseteq [K] : |\mathcal{S}| = \alpha, \quad \forall i \in [f]
\end{equation}
where $W_{i, \mathcal{S}, \mathcal{T}}$ is the subfile of $W_{i, \mathcal{S}}$ cached exactly and only by users in $\mathcal{T}$. As already mentioned in~\Cref{def: Uncoded and Selfish Cache Placement Definition}, splitting each file in this way satisfies the selfish cache placement constraint, since $\mathcal{T} \subseteq \mathcal{S}$ and $\mathcal{W}_{\mathcal{S}} \in \mathcal{F}_{k}$ for each $k \in \mathcal{S}$.

\subsubsection{Constructing the Index Coding Problem}

We now make the connection to index coding and we consider the index coding problem with $K' = K$ users and $N' = K2^{\alpha - 1} $ messages, such that for any demand, identified by the vectors $\bm{d} = (\mathcal{D}_{1}, \dots, \mathcal{D}_{K})$ and $\bm{f} = (f_{1}, \dots, f_{K})$, the desired message set and the side information set are respectively given by
\begin{align}
  \mathcal{M}_{k} & = \{W_{f_{k}, \mathcal{D}_{k}, \mathcal{T}} : \mathcal{T} \subseteq \mathcal{D}_{k}, k \notin \mathcal{T} \} \\
  \mathcal{A}_{k} & = \{W_{i, \mathcal{S}, \mathcal{T}} : i \in [f], \mathcal{S} \subseteq [K], |\mathcal{S}| = \alpha, \mathcal{T} \subseteq \mathcal{S}, k \in \mathcal{S} \cap \mathcal{T}\}
\end{align}
for each user $k \in [K]$. For this setting the side information graph takes the form of a directed graph where each subfile represents a vertex, and where there is a connection from (the node corresponding to) $W_{f_{k_{1}}, \mathcal{D}_{k_{1}}, \mathcal{T}_{1}}$ to $W_{f_{k_{2}}, \mathcal{D}_{k_{2}}, \mathcal{T}_{2}}$ if and only if $W_{f_{k_{1}}, \mathcal{D}_{k_{1}}, \mathcal{T}_{1}} \in \mathcal{A}_{k_{2}}$, i.e., if and only if $k_{2} \in \mathcal{T}_{1}$. To apply \Cref{thm: Acyclic Subgraph Converse Bound}, we are interested in acyclic sets of vertices $\mathcal{J}$ in such side information graph. In the spirit of \cite{8963629}, we know that the set
\begin{equation}\label{eqn: Acyclic Set of Vertices}
  \bigcup_{k \in [K]} \bigcup_{\mathcal{T} \subseteq [K] \setminus \{u_{1}, \dots, u_{k}\} \cap \mathcal{D}_{u_{k}}} \left\{W_{f_{u_{k}}, \mathcal{D}_{u_{k}}, \mathcal{T}}\right\}
\end{equation}
does not contain any directed cycle\footnote{Notice that \cite[Lemma 1]{8963629} considers in fact $\mathcal{T} \subseteq [K] \setminus \{u_{1}, \dots, u_{k}\}$ and not $\mathcal{T} \subseteq [K] \setminus \{u_{1}, \dots, u_{k}\} \cap \mathcal{D}_{u_{k}}$. However, the latter is a subset of the former, thus the lemma still holds.} for any demand $(\bm{d}, \bm{f})$ and any vector $\bm{u}$, where $\bm{u} = (u_1, \dots, u_K)$ is a permutation of the users in $[K]$. Consequently, applying \Cref{thm: Acyclic Subgraph Converse Bound} yields the following lower bound
\begin{equation}\label{eqn: Index Coding Lower Bound}
  BR^{\star} \geq \sum_{k \in [K]} \sum_{\mathcal{T} \subseteq [K] \setminus \{u_{1}, \dots, u_{k}\} \cap \mathcal{D}_{u_{k}}} \left| W_{f_{u_{k}}, \mathcal{D}_{u_{k}}, \mathcal{T}} \right|.
\end{equation}

\subsubsection{Selection of Distinct Demands}

Now we wish to create several lower bounds as the one in \eqref{eqn: Index Coding Lower Bound} considering different user permutations $\bm{u}$, and considering a subset of user demands --- each determined by the tuple $(\bm{d}, \bm{f})$ with $\bm{d} = (\mathcal{D}_1, \dots, \mathcal{D}_K)$ and $\bm{f} = (f_1, \dots, f_K)$. Our aim is to eventually average these bounds in order to obtain a useful lower bound on the optimal communication load. For $\mathcal{C}$ being the set of properly selected demands described further below and $\mathcal{U}_{(\bm{d}, \bm{f})}$ being the set of selected user permutations for each demand $(\bm{d}, \bm{f})$ in $\mathcal{C}$, we seek to characterize the expression given by
\begin{equation}\label{eqn: Complete Lower Bound}
    BR^\star \sum_{(\bm{d}, \bm{f}) \in \mathcal{C}} |\mathcal{U}_{(\bm{d}, \bm{f})}| \geq \sum_{(\bm{d}, \bm{f}) \in \mathcal{C}}  \sum_{\bm{u} \in \mathcal{U}_{(\bm{d}, \bm{f})}}\sum_{k \in [K]} \sum_{\mathcal{T} \subseteq [K] \setminus \{u_{1}, \dots, u_{k}\} \cap \mathcal{D}_{u_{k}}} \left| W_{f_{u_{k}}, \mathcal{D}_{u_{k}}, \mathcal{T}} \right|.
\end{equation}
Notice that the goal of carefully selecting the demand set $\mathcal{C}$ and the permutation set $\mathcal{U}_{(\bm{d}, \bm{f})}$ is twofold. The first is to provide the symmetry that will allow us to simplify \eqref{eqn: Complete Lower Bound} into a meaningful expression, and the second is to force the bound to be as tight as possible. 

Toward this, we proceed to select $\mathcal{C}$ to contain circular demands, as these are defined as follows.

\begin{definition}[Circular Demands]\label{def: Circular Demands}
  A demand defined by the vectors $\bm{d} = (\mathcal{D}_{1}, \dots, \mathcal{D}_{K})$ and $\bm{f} = (f_{1}, \dots, f_{K})$ is said to be a \emph{circular demand} if there exists a permutation $\hat{\bm{u}} = (\hat{u}_{1}, \dots, \hat{u}_{K})$ of the set of users $[K]$ such that\footnote{We imply $i \bmod K$ whenever $i > K$.}
  \begin{equation}\label{eqn: Circular Demand Property}
    W_{f_{\hat{u}_{k}}, \mathcal{D}_{\hat{u}_{k}}} \in \bigcap_{i = k + 1}^{k + \alpha - 1}\{\mathcal{F}_{\hat{u}_{i}}\}
  \end{equation}
  for each $k \in [K]$. This simply means that the demand reflects a circular pattern if $\cup_{i = k + 1}^{k + \alpha - 1} \{\hat{u}_{i}\} = \mathcal{D}_{\hat{u}_{k}} \setminus \{\hat{u}_{k}\}$ for each $k \in [K]$. 
\end{definition}

This new definition allows us to describe the sets $\mathcal{C}$ and $\mathcal{U}_{(\bm{d}, \bm{f})}$ as
\begin{align}
    \mathcal{C} & \coloneqq \{(\bm{d}, \bm{f}) : \text{the demand $(\bm{d}, \bm{f})$ is circular} \} \\
    \mathcal{U}_{(\bm{d}, \bm{f})} & \coloneqq \{\text{$K$ circular shifts of the vector $\hat{\bm{u}}$ associated to a given $(\bm{d}, \bm{f}) \in \mathcal{C}$}\}.
\end{align}
These sets will generally yield larger acyclic subgraphs\footnote{This is based on the following observation. Throughout various examples, such demands generally yielded the largest bounds compared to other classes of demands.} in~\eqref{eqn: Acyclic Set of Vertices} that can be used to increase the right-hand side in~\eqref{eqn: Index Coding Lower Bound}, and to provide a better lower bound on $R^\star$.

\subsubsection{Counting the Selected Demands}

Our goal now is to simplify~\eqref{eqn: Complete Lower Bound} into a more meaningful expression. We start by counting how many circular demands there are. To do so, we observe that there is a one-to-one correspondence between one circular demand $(\bm{d}, \bm{f}) \in \mathcal{C}$ and the corresponding set $\mathcal{U}_{(\bm{d}, \bm{f})}$ of permutations of users. This is easy to see, and the intuition is as follows. A demand is said to be circular if there exists a particular ordering of users such that the property in \eqref{eqn: Circular Demand Property} is satisfied. Such ordering is described by the vector $\hat{\bm{u}}$ and is clearly preserved under any circular shift of such vector. Consequently evaluating $|\mathcal{C}|$ is equivalent to counting the vectors $\hat{\bm{u}}$, since each of them corresponds to a distinct circular demand.

Let us focus on user $k \in [K]$. Counting the total number of circular demands where user $k$ requests the file $W_{f_k, \mathcal{D}_k}$ is equivalent to counting the total number of vectors $\hat{\bm{u}}$ such that $\cup_{i = k + 1}^{ k + \alpha - 1}\{\hat{u}_i\} = \mathcal{D}_k \setminus \{k\}$ and $\cup_{i = k + \alpha}^{K + k - 1} \{\hat{u}_i\} = [K] \setminus \mathcal{D}_k$. Recalling that we imply $i \bmod K$ whenever $i > K$, we see that there are $(\alpha - 1)!(K - \alpha)!f^{K - 1}$ such vectors. Then, if we recall that user $k$ can request a total of $f\binom{K - 1}{\alpha - 1}$ files, we see that the total number of circular demands is equal to
\begin{equation}
    |\mathcal{C}| = f\binom{K - 1}{\alpha - 1}(\alpha - 1)!(K - \alpha)!f^{K - 1} = f^K(K - 1)!.
\end{equation}
Furthermore, since $|\mathcal{U}_{(\bm{d}, \bm{f})}| = K$ for each circular demand, we can see that there is a total of $\sum_{(\bm{d}, \bm{f}) \in \mathcal{C}}|\mathcal{U}_{(\bm{d}, \bm{f})}| = f^K K!$ lower bounds --- as the one in \eqref{eqn: Index Coding Lower Bound} --- created for the expression in \eqref{eqn: Complete Lower Bound}.

\subsubsection{Constructing the Optimization Problem}

We will seek to simplify the expression in \eqref{eqn: Complete Lower Bound}, and then to minimize the new simplified expression, in order to lower bound the optimal worst-case load $R^\star$. Toward simplifying, we first count how many times each subfile $W_{i, \mathcal{S}, \mathcal{T}}$ appears in \eqref{eqn: Complete Lower Bound}, where $i \in [f]$, $\mathcal{S} \subseteq [K]$ with $|\mathcal{S}| = \alpha$, $\mathcal{T} \subseteq \mathcal{S}$ and $|\mathcal{T}| = t$. For this purpose, we make use of the following lemma.

\begin{lemma}\label{lem: Circular Shift Lemma}
  Let $\hat{\bm{u}} = (\hat{u}_{1}, \dots, \hat{u}_{K})$ be a permutation of the elements in $[K]$, let $\mathcal{U}$ be the set composed of the $K$ circular shifts of the vector $\hat{\bm{u}}$, and let $k_1, k_2 \in [K]$ such that $k_1 \neq k_2$. Consider
  \begin{equation}
      \ell \coloneqq \left|\bigcup_{i = \hat{u}(k_1) + 1}^{\hat{u}(k_2)} \{\hat{u}_i\}\right|
  \end{equation}
  where we assume $i \bmod K$ whenever $i > K$. Then, there is a total of $(K - \ell)$ vectors $\bm{u} \in \mathcal{U}$ such that $k_1$ appears before $k_2$ in the vector $\bm{u}$.
\end{lemma}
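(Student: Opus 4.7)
The plan is to reduce the claim to a counting exercise on the cyclic arrangement of users induced by $\hat{\bm{u}}$. The starting observation is that $\ell$ has a natural interpretation as the \emph{clockwise distance} from $k_1$ to $k_2$ along that cycle: walking one step at a time from the position $\hat{u}(k_1) + 1$ to the position $\hat{u}(k_2)$ (indices taken modulo $K$), the number of positions visited is exactly $\ell$, and this walk includes $k_2$ but not $k_1$. Crucially, this intrinsic cyclic distance is invariant under circular shifts, because all $K$ vectors in $\mathcal{U}$ share the same underlying cyclic order.

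Next I would parametrize $\mathcal{U}$ by a shift parameter $s \in \{0, 1, \dots, K-1\}$, writing the $s$-th shift of $\hat{\bm{u}}$ as the permutation that places $\hat{u}_{s+1}$ in its first slot. In this permutation, the position of $k_1$ is
\begin{equation}
q_1(s) = ((\hat{u}(k_1) - s - 1) \bmod K) + 1,
\end{equation}
and similarly $q_2(s)$ for $k_2$. As $s$ ranges over $\{0, \dots, K-1\}$, the map $s \mapsto q_1(s)$ is a bijection onto $[K]$.

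The critical step is then to translate ``$k_1$ appears before $k_2$'' into a condition on $q_1(s)$. Invariance of the cyclic distance gives $q_2(s) - q_1(s) \equiv \ell \pmod{K}$, so exactly one of two mutually exclusive cases occurs: either $q_1(s) + \ell \leq K$, in which case $q_2(s) = q_1(s) + \ell$ and $k_1$ precedes $k_2$; or $q_1(s) + \ell > K$, in which case $q_2(s) = q_1(s) + \ell - K < q_1(s)$ and $k_2$ precedes $k_1$. The number of shifts with $k_1$ before $k_2$ therefore equals the number of values $s$ for which $q_1(s) \in \{1, 2, \dots, K - \ell\}$, and by the bijectivity of $q_1$ this equals $K - \ell$.

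The main obstacle is not conceptual but rather bookkeeping: one must carefully verify that the index-based definition of $\ell$ in the lemma matches the claimed clockwise distance in both sub-cases $\hat{u}(k_1) < \hat{u}(k_2)$ (no wrap) and $\hat{u}(k_1) > \hat{u}(k_2)$ (wrap). In both cases the cardinality of $\bigcup_{i = \hat{u}(k_1)+1}^{\hat{u}(k_2)} \{\hat{u}_i\}$ evaluates to $(\hat{u}(k_2) - \hat{u}(k_1)) \bmod K$, taken in $\{1, \dots, K-1\}$ since $k_1 \neq k_2$. Once this identification is established, the modular counting above closes the argument cleanly.
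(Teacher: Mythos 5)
Your proposal is correct and follows essentially the same route as the paper's proof: both arguments reduce the claim to counting which of the $K$ circular shifts start at a position that places $k_1$ ahead of $k_2$, using the shift-invariance of the cyclic distance $\ell$. The only difference is presentational — you parametrize positions explicitly and handle the wrap-around case without loss-of-generality assumptions, whereas the paper assumes $\hat{u}(k_1) < \hat{u}(k_2)$ and directly identifies the $\ell$ starting elements that put $k_2$ first.
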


\begin{proof}
  The proof is reported in~\refappendix{app: Proof of the Circular Shift Lemma}.
\end{proof}

Let us focus on subfile $W_{i, \mathcal{S}, \mathcal{T}}$. We start by considering all circular demands $(\bm{d}, \bm{f}) \in \mathcal{C}$ such that $f_{k_1} = i$ and $\mathcal{D}_{k_1} = \mathcal{S}$ for some $k_1 \in \mathcal{S}$ and $\mathcal{T} \subseteq \mathcal{S} \setminus \{k_1\}$ with $|\mathcal{T}| = t$. For each of these circular demands we have a vector $\hat{\bm{u}}$ of ordered users and we select as user permutations the vectors in the set $\mathcal{U}_{(\bm{d}, \bm{f})}$. Considering how the acyclic set of vertices in \eqref{eqn: Acyclic Set of Vertices} is built, it is clear that $W_{i, \mathcal{S}, \mathcal{T}}$ appears in \eqref{eqn: Complete Lower Bound} whenever \emph{all} the elements in $\mathcal{T}$ appear after $k_1$ in $\bm{u} \in \mathcal{U}_{(\bm{d}, \bm{f})}$. By \Cref{lem: Circular Shift Lemma}, we know that this happens a total of $(K - \ell)$ times, where
\begin{equation}
    \ell = \max_{j \in \mathcal{T}} \left| \bigcup_{i = \hat{u}(k_1) + 1}^{\hat{u}(j)} \{\hat{u}_i\} \right|.
\end{equation}
Notice that the above maximization is required since our aim is to count --- for any given $\hat{\bm{u}}$, and thus for a given circular demand --- how many times \emph{all} the elements in $\mathcal{T}$ appear after\footnote{Indeed, recalling that we build acyclic subgraphs as in~\eqref{eqn: Acyclic Set of Vertices}, the subfile $W_{f_{k_1}, \mathcal{D}_{k_1}, \mathcal{T}}$ appears in~\eqref{eqn: Complete Lower Bound} only when \emph{all} elements in $\mathcal{T}$ are after $k_1$ in the vector $\bm{u} \in \mathcal{U}_{(\bm{d}, \bm{f})}$ and the maximization is needed to count in how many of such user permutations it happens to have \emph{all} the elements in $\mathcal{T}$ after $k_1$.} $k_1$ when considering all the users permutations in $\mathcal{U}_{(\bm{d}, \bm{f})}$.

Recalling that $|\mathcal{T}| = t$, we observe that $\ell \in [t : \alpha - 1]$. To see this, we note that the minimum value of $\ell$ is $t$ when $k_1$ and all the elements in $\mathcal{T}$ are in consecutive positions in the vector $\hat{\bm{u}}$. Additionally, we also note that the maximum value of $\ell$ is $(\alpha - 1)$, because $\mathcal{T} \subseteq \mathcal{S} \setminus \{k_1\}$ and all the $(\alpha - 1)$ elements in $\mathcal{S} \setminus \{k_1\}$ are immediately after $k_1$ in $\hat{\bm{u}}$ (and any of its circular shifts in $\mathcal{U}_{(\bm{d}, \bm{f})}$). This is because we are considering circular demands where user $k_1$ requests for the file class $\mathcal{D}_{k_1} = \mathcal{S}$. Hence, when we consider all possible values of $\ell$, the subfile $W_{i, \mathcal{S}, \mathcal{T}}$ is counted a total of $\sum_{\ell = t}^{\alpha - 1} a_\ell (K - \ell)$ times in~\eqref{eqn: Complete Lower Bound} when we focus on circular demands with $f_{k_1} = i$ and $\mathcal{D}_{k_1} = \mathcal{S}$. The term
\begin{equation}
    a_\ell \coloneqq t!(\alpha - 1 - t)!(K - \alpha)!\binom{\ell - 1}{t - 1}f^{K - 1}
\end{equation}
counts the total number of vectors $\hat{\bm{u}}$ (and consequently of circular demands) for which $f_{k_1} = i$, $\mathcal{D}_{k_1} = \mathcal{S}$ and $\max_{j \in \mathcal{T}} \left| \bigcup_{i = \hat{u}(k_1) + 1}^{\hat{u}(j)} \{\hat{u}_i\} \right| = \ell$. Since the same reasoning applies whenever the file $W_{i, \mathcal{S}, \mathcal{T}}$ is requested by any of the other $(\alpha - t - 1)$ users in $\mathcal{S} \setminus \{\mathcal{T}, k_1\}$, namely, when $f_k = i$ and $\mathcal{D}_k$ for every $k \in \mathcal{S} \setminus \{\mathcal{T}, k_1\}$, we can conclude that the subfile $W_{i, \mathcal{S}, \mathcal{T}}$ appears a total of $(\alpha - t)\sum_{\ell = t}^{\alpha - 1}a_\ell (K - \ell)$ times in \eqref{eqn: Complete Lower Bound} when we consider all circular demands in $\mathcal{C}$. Moreover, the same reasoning applies to any other subfile. Hence, the expression in \eqref{eqn: Complete Lower Bound} simplifies as
\begin{align}
    R^\star & \geq \frac{1}{B\sum_{(\bm{d}, \bm{f}) \in \mathcal{C}} |\mathcal{U}_{(\bm{d}, \bm{f})}|} \sum_{(\bm{d}, \bm{f}) \in \mathcal{C}}  \sum_{\bm{u} \in \mathcal{U}_{(\bm{d}, \bm{f})}}\sum_{k \in [K]} \sum_{\mathcal{T} \subseteq [K] \setminus \{u_{1}, \dots, u_{k}\} \cap \mathcal{D}_{u_{k}}} \left| W_{f_{u_{k}}, \mathcal{D}_{u_{k}}, \mathcal{T}} \right| \\
            & = \frac{1}{Bf^K K!} \sum_{(\bm{d}, \bm{f}) \in \mathcal{C}}  \sum_{\bm{u} \in \mathcal{U}_{(\bm{d}, \bm{f})}}\sum_{k \in [K]} \sum_{\mathcal{T} \subseteq [K] \setminus \{u_{1}, \dots, u_{k}\} \cap \mathcal{D}_{u_{k}}} \left| W_{f_{u_{k}}, \mathcal{D}_{u_{k}}, \mathcal{T}} \right| \\
            & = \sum_{t = 0}^{\alpha} f(t) x_{t}
\end{align}
where we defined 
\begin{align}
    c_t & \coloneqq \frac{(\alpha - t)}{f^K K!}\sum_{\ell = t}^{\alpha - 1} a_\ell (K - \ell) \\
    f(t) & \coloneqq N c_t \\
    0 \leq x_{t} & \coloneqq \sum_{\mathcal{S} \subseteq [K] : |\mathcal{S}| = \alpha} \sum_{\mathcal{T} \subseteq \mathcal{S} : |\mathcal{T}| = t} \sum_{i \in [f]} \frac{\left| W_{i, \mathcal{S}, \mathcal{T}} \right|}{NB}.
\end{align}

At this point, we seek to lower bound the minimum worst-case load $R^{\star}(t)$ by solving the following optimization problem
\begin{subequations}\label{eqn: Optimization Problem}
  \begin{alignat}{2}
    &\min_{\bm{x}}  & \quad &\sum_{t = 0}^{\alpha} f(t) x_{t}\\
    &\text{subject to}  &  & \sum_{t = 0}^{\alpha}x_{t} = 1 \label{eqn: File Size Constraint}\\
    & & & \sum_{t = 0}^{\alpha}tx_{t} \leq \frac{KM}{N} \label{eqn: Memory Size Constraint}
  \end{alignat}
\end{subequations}
where \eqref{eqn: File Size Constraint} and \eqref{eqn: Memory Size Constraint} correspond to the file size constraint and the cumulative cache size constraint, respectively.

\subsubsection{Simplifying and Solving the Optimization Problem}
Before proceeding to solve the optimization problem, we wish to further simplify the coefficients $c_t$ and thus also $f(t)$. Indeed, this coefficient $c_{t}$ can be rewritten as
\begin{equation}
  c_{t} = \frac{1}{N\binom{\alpha}{t}}\sum_{\ell = t}^{\alpha - 1}\binom{\ell - 1}{t - 1}(K - \ell)
\end{equation}
recalling that $N = f\binom{K}{\alpha}$. Then, we can write
\begin{align}
  c_{t} & = \frac{1}{N\binom{\alpha}{t}}\sum_{\ell = t}^{\alpha - 1}\binom{\ell - 1}{t - 1}(K - \ell)\\
        & = \frac{1}{N\binom{\alpha}{t}}\left(K\sum_{\ell = t}^{\alpha - 1}\binom{\ell - 1}{t - 1} - \sum_{\ell = t}^{\alpha - 1}\ell\binom{\ell - 1}{t - 1}\right)\\
        & = \frac{1}{N\binom{\alpha}{t}}\left(K\sum_{\ell = t - 1}^{\alpha - 2}\binom{\ell}{t - 1} - t\sum_{\ell = t}^{\alpha - 1}\binom{\ell}{t}\right) \\
        & = \frac{K\binom{\alpha - 1}{t} - t\binom{\alpha}{t + 1}}{N\binom{\alpha}{t}}\label{eqn: Application of Hockey-Stick Identity} \\
        & = \frac{\binom{\alpha}{t + 1} + (K - \alpha)\binom{\alpha - 1}{t}}{N\binom{\alpha}{t}}
\end{align}
where \eqref{eqn: Application of Hockey-Stick Identity} uses the well-known hockey-stick identity which states that
\begin{equation}
    \sum_{i = k}^{n}\binom{i}{k} = \binom{n + 1}{k + 1}, \quad \forall n, k \in \mathbb{N}, \quad n \geq k.
\end{equation}
At this point, we can rewrite $f(t)$ as
\begin{equation}
    f(t) = \frac{\binom{\alpha}{t + 1} + (K - \alpha)\binom{\alpha - 1}{t}}{\binom{\alpha}{t}}.
\end{equation}

Now, since the auxiliary variable $x_{t}$ can be considered as a probability mass function, the optimization problem in \eqref{eqn: Optimization Problem} can be seen as the minimization of $\mathbb{E}[f(t)]$. Moreover, the following holds.
\begin{lemma}\label{lem: Strictly Decreasing Sequence}
  The function $f(t)$ is convex and is strictly decreasing for increasing values of $t$.
\end{lemma}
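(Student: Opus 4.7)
The plan is to first rewrite $f(t)$ in a transparent closed form, after which both claims reduce to elementary calculus. Using the standard identities $\binom{\alpha}{t+1}/\binom{\alpha}{t} = (\alpha-t)/(t+1)$ and $\binom{\alpha-1}{t}/\binom{\alpha}{t} = (\alpha-t)/\alpha$, I would immediately simplify
\begin{equation*}
f(t) \,=\, \frac{\alpha - t}{t + 1} \,+\, (K - \alpha)\,\frac{\alpha - t}{\alpha} \,=\, (\alpha - t)\left[\frac{1}{t + 1} + \frac{K - \alpha}{\alpha}\right].
\end{equation*}

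Next I would split $f = g + h$ with $g(t) \coloneqq (\alpha - t)/(t + 1)$ and $h(t) \coloneqq \tfrac{K-\alpha}{\alpha}(\alpha - t)$, and treat both as functions of a real variable $t$ on the interval $[0, \alpha]$. Writing $g(t) = (\alpha + 1)/(t + 1) - 1$ and differentiating gives $g'(t) = -(\alpha + 1)/(t + 1)^{2} < 0$ together with $g''(t) = 2(\alpha + 1)/(t + 1)^{3} > 0$, so $g$ is strictly decreasing and strictly convex. The function $h$ is affine in $t$, hence convex, with slope $-(K - \alpha)/\alpha$; in the non-trivial range $\alpha \in [2 : K - 1]$ this slope is strictly negative, so $h$ is strictly decreasing as well (and in the boundary case $\alpha = K$ it is merely the zero function, which does no harm). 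Since the sum of a strictly decreasing strictly convex function and a (weakly) decreasing affine function is strictly decreasing and convex, the same holds for $f$ on $[0,\alpha]$. Restricting to integer arguments then yields both the discrete convexity $f(t+1) + f(t-1) \geq 2 f(t)$ and the strict monotonicity $f(t+1) < f(t)$ that will later justify the corner-point solution of the linear program in \eqref{eqn: Optimization Problem}.

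The argument is almost entirely algebraic: the only mildly delicate step is the initial binomial simplification, after which the conclusion follows from two elementary derivatives. I do not anticipate any genuine obstacle, since the core piece $(\alpha - t)/(t + 1)$ is essentially the MAN rate expression (with $\alpha$ in place of $K$) whose convexity is a well-known fact, and the selfish correction simply adds a linear term $\tfrac{K-\alpha}{\alpha}(\alpha - t)$ that preserves both properties.
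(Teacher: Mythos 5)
Your proposal is correct and follows essentially the same route as the paper: both reduce $f(t)$ to the closed form $\frac{\alpha-t}{t+1} + (K-\alpha)\frac{\alpha-t}{\alpha}$ and then verify the sign of the first and second derivatives. If anything, your conclusion of strict monotonicity directly from $f'(t)<0$ is slightly cleaner than the paper's detour through convexity plus the endpoint values $f(0)=K>f(\alpha)=0$, but the substance is identical.
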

\begin{proof}
  The proof is reported in \refappendix{app: Proof of the Strictly Decreasing Sequence Lemma}.
\end{proof}
Taking advantage of \Cref{lem: Strictly Decreasing Sequence}, we can write $\mathbb{E}[f(t)] \geq f(\mathbb{E}[t])$ using Jensen's inequality. Then, since $f(t)$ is strictly decreasing with increasing $t \in [0 : \alpha]$, we can further write $f(\mathbb{E}[t]) \geq f\left(\frac{KM}{N}\right)$ taking advantage of the fact that $\mathbb{E}[t]$ is upper bounded as in \eqref{eqn: Memory Size Constraint}. Consequently, $\mathbb{E}[f(t)] \geq f\left(\frac{KM}{N}\right)$, and thus for $t = KM/N$ the converse bound is a piece-wise linear curve with corner points
\begin{equation}
  (M, R_{\text{LB}}) = \left(t\frac{N}{K}, \frac{\binom{\alpha}{t + 1} + (K - \alpha)\binom{\alpha - 1}{t}}{\binom{\alpha}{t}}\right), \quad \forall t \in [0 : \alpha].
\end{equation}
Thus, $R_{\text{LB}}$ takes the form
\begin{align}
    R_{\text{LB}} & = \frac{\binom{\alpha}{t + 1} + (K - \alpha)\binom{\alpha - 1}{t}}{\binom{\alpha}{t}} \\
                  & = \frac{\alpha - t}{1 + t} + (K - \alpha)\left(1 - \frac{t}{\alpha}\right) \\
                  & = \frac{\alpha(1 - \gamma_\alpha)}{1 + K\gamma} + K(1 - \gamma_\alpha) - \alpha(1 - \gamma_\alpha) \\
                  & = K(1 - \gamma_\alpha)\left[ 1 + \frac{\alpha}{K(1 + K\gamma)} - \frac{\alpha}{K} \right] \\
                  & = \frac{K(1 - \gamma_\alpha)}{1 + K\gamma} \Big[1 + K\gamma + \frac{\alpha}{K} - \frac{\alpha}{K}(1 + K\gamma)\Big] \\
                  & = \frac{K(1 - \gamma_\alpha)}{1 + K\gamma} \Big[(K - \alpha)\gamma + 1 \Big]
\end{align}
which completes the proof.

\subsection{A Detailed Example for the Converse Bound}\label{sec: An Exhaustive Example for the Converse Bound}

We present here in detail an example that can help the reader better understand the construction of the converse bound.

Let us consider the $(6, 4, 1)$ FDS structure which involves a file library $\mathcal{L} = \{W_{1, \mathcal{S}} : \mathcal{S} \subseteq [6], |\mathcal{S}| = 4 \}$ consisting of $C = \binom{K}{\alpha} = \binom{6}{4} = 15$ classes of files. Since there is only $f = 1$ file per class, there is a total of $N = fC = 15$ files, hence in this example we make no distinction between files and classes of files. Thus, for simplicity, we will here refer to file $W_{1, \mathcal{S}}$ directly as $W_{\mathcal{S}}$, which means that each file is entirely described by a $4$-tuple $\mathcal{S} \subseteq [6]$, and each demand instance is entirely defined by the $\bm{d} = (\mathcal{D}_1, \dots, \mathcal{D}_K)$ vector only. The FDS of each user $k \in [6]$ is given by
\begin{equation}
  \mathcal{F}_{k} = \{W_{S} : \mathcal{S} \subseteq [6], |\mathcal{S}| = 4, k \in \mathcal{S}\}
\end{equation}
and it consists of $f\binom{K - 1}{\alpha - 1} = 10$ files. Hence, this example considers $M \in [0 : 10]$, simply because having $M \geq 10$ implies that each user can preemptively cache the entirety of its FDS, which in turn implies $R^\star\left(|\mathcal{F}|\right) = 0$.

We start by assuming the most general uncoded and selfish cache placement where each file $W_{\mathcal{S}}$ is split into a total of  $2^{|\mathcal{S}|} = 2^{\alpha} = 16$ disjoint subfiles as
\begin{equation}
    W_{\mathcal{S}} = \{W_{\mathcal{S}, \mathcal{T}} : \mathcal{T} \subseteq \mathcal{S}\}, \quad \forall \mathcal{S} \subseteq [6] : |\mathcal{S}| = 4
\end{equation}
where we recall that $W_{\mathcal{S}, \mathcal{T}}$ is the subfile of $W_\mathcal{S}$ cached by the users in $\mathcal{T}$, and where we recall that the selfishness condition is guaranteed by forcing $\mathcal{T} \subseteq \mathcal{S}$.

\subsubsection{Constructing the Index Coding Problem}

For any given demand $\bm{d} = (\mathcal{D}_1, \dots, \mathcal{D}_K)$ where user $k$ asks for $W_{\mathcal{D}_{k}}$, we consider the index coding problem with $K' = K = 6$ users and $N' = K2^{\alpha - 1} = 48$ messages, where each user $k \in [6]$ has a desired message set
\begin{equation}
  \mathcal{M}_{k} = \{W_{\mathcal{D}_{k}, \mathcal{T}} : \mathcal{T} \subseteq \mathcal{D}_{k}, k \notin \mathcal{T} \}
\end{equation}
and a side information set
\begin{equation}
  \mathcal{A}_{k} = \{W_{\mathcal{S}, \mathcal{T}} : \mathcal{S} \subseteq [K], |\mathcal{S}| = \alpha, \mathcal{T} \subseteq \mathcal{S}, k \in \mathcal{S} \cap \mathcal{T}\}.
\end{equation}
The side information graph of this index coding problem is the directed graph where each desired subfile represents a graph vertex and where a connection exists from vertex $W_{\mathcal{D}_{k_{1}}, \mathcal{T}_{1}}$ to $W_{\mathcal{D}_{k_{2}}, \mathcal{T}_{2}}$ if and only if $W_{\mathcal{D}_{k_{1}}, \mathcal{T}_{1}} \in \mathcal{A}_{k_{2}}$. To now create an acyclic subgraph of the above graph, we recall from~\cite[Lemma 1]{8963629} that the set
\begin{equation}\label{eqn: Acyclic Set of Vertices 2}
  \bigcup_{k \in [K]} \bigcup_{\mathcal{T} \subseteq [K] \setminus \{u_{1}, \dots, u_{k}\} \cap \mathcal{D}_{u_{k}}} \left\{W_{\mathcal{D}_{u_{k}}, \mathcal{T}}\right\}
\end{equation}
does not contain any directed cycle for any demand $\bm{d}$ and user permutation $\bm{u}$. This, together with \Cref{thm: Acyclic Subgraph Converse Bound}, implies that
\begin{equation}\label{eqn: Index Coding Lower Bound 2}
  BR^{\star} \geq \sum_{k \in [K]} \sum_{\mathcal{T} \subseteq [K] \setminus \{u_{1}, \dots, u_{k}\} \cap \mathcal{D}_{u_{k}}} \left| W_{\mathcal{D}_{u_{k}}, \mathcal{T}} \right|.
\end{equation}

\subsubsection{Selection of Distinct Demands}

To render the above bound meaningful, we need to carefully select a set of demands that will eventually help us symmetrize the problem as well as render the bound tighter. Toward this, we create several lower bounds as the one in \eqref{eqn: Index Coding Lower Bound 2}, one for each chosen demand and user permutation. Our desired symmetry is achieved by considering only the set of circular demands $\mathcal{C}$ and, for each $\bm{d} \in \mathcal{C}$, the user permutations in $\mathcal{U}_{\bm{d}}$, where this last set simply contains the $K$ circular shifts of the vector $\hat{\bm{u}}$ associated to each circular demand.
For example, in our $(6, 4, 1)$ FDS scenario, one such circular demand is $\bm{d} = (1234, 2345, 3456, 1456, 1256, 1236)$, because it satisfies the condition $\cup_{i = k + 1}^{k + \alpha - 1} \{\hat{u}_{i}\} = \mathcal{D}_{\hat{u}_{k}} \setminus \{\hat{u}_{k}\}$ for each $k \in [6]$ with the vector $\hat{\bm{u}} = (1, 2, 3, 4, 5, 6)$. This same condition is also satisfied by all $K$ circular shifts of $\hat{\bm{u}} = (1, 2, 3, 4, 5, 6)$. For this demand $\bm{d} = (1234, 2345, 3456, 1456, 1256, 1236)$, we will thus construct $6$ bounds as in~\eqref{eqn: Index Coding Lower Bound 2}.

By averaging all such bounds, a new lower bound on $R^\star$ appears in the following form
\begin{equation}\label{eqn: Complete Lower Bound 2}
    BR^\star \sum_{\bm{d} \in \mathcal{C}} |\mathcal{U}_{\bm{d}}| \geq \sum_{\bm{d} \in \mathcal{C}}  \sum_{\bm{u} \in \mathcal{U}_{\bm{d}}} \sum_{k \in [K]} \sum_{\mathcal{T} \subseteq [K] \setminus \{u_{1}, \dots, u_{k}\} \cap \mathcal{D}_{u_{k}}} \left| W_{\mathcal{D}_{u_{k}}, \mathcal{T}} \right|.
\end{equation}

\subsubsection{Counting the Selected Demands}

To simplify the above, we proceed to count the total number of circular demands. Let us focus without loss of generality on user~$1$ and also on those circular demands where user~$1$ requests the file $W_{\mathcal{S}}$ such that $\mathcal{S} = \mathcal{D}_1 = \{1, 2, 3, 4\}$. We can see that there exists a total of $(\alpha - 1)!(K - \alpha)! = 12$ circular demands with this $\mathcal{D}_1$. If we consider $\hat{u}_{1} = 1$ without loss of generality, such demands are all those associated to a vector $\hat{\bm{u}}$ where $\cup_{i = 2}^{\alpha}\{\hat{u}_{i}\} = \mathcal{D}_{1} \setminus \{1\} = \{2, 3, 4\}$ and $\cup_{i = \alpha + 1}^{K}\{\hat{u}_{i}\} = [K] \setminus \mathcal{D}_{1} = \{5, 6\}$. Given that user~$1$ can ask for a file among a total of $|\mathcal{F}_1| = \binom{K - 1}{\alpha - 1} = 10$ files, we can conclude that the total number of such circular demands is equal to $|\mathcal{C}| = |\mathcal{F}_1|(\alpha - 1)!(K - 1)! = 120$. For each such demand we can build $K$ index coding bounds, one for each of the $K$ circular shifts corresponding to the vector $\hat{\bm{u}}$, thus resulting in a total of $\sum_{\bm{d} \in \mathcal{C}} |\mathcal{U}_{\bm{d}}| = 720$ lower bounds --- as the one in \eqref{eqn: Index Coding Lower Bound 2} --- being used for the expression in \eqref{eqn: Complete Lower Bound 2}.

\subsubsection{Constructing the Optimization Problem}

Let us keep our focus on user~$1$ and again on those circular demands where user~$1$ requests the file $W_{\mathcal{S}}$ with $\mathcal{S} = \{1, 2, 3, 4\} = \mathcal{D}_1$. Since there is a one-to-one correspondence between the $(\alpha - 1)!(K - \alpha)! = 12$ circular demands and the set of $K$ circular shifts of the ordered vector of users $\hat{\bm{u}}$, then --- assuming without loss of generality that $\hat{u}_{1} = 1$ --- the vectors $\hat{\bm{u}}$ for all the $12$ demands take the form
\begin{alignat}{2}
  \hat{\bm{u}}_{1} & = (1, 2, 3, 4, 5, 6) & \qquad \hat{\bm{u}}_{7} & = (1, 2, 3, 4, 6, 5) \\
  \hat{\bm{u}}_{2} & = (1, 3, 2, 4, 5, 6) & \qquad \hat{\bm{u}}_{8} & = (1, 3, 2, 4, 6, 5) \\
  \hat{\bm{u}}_{3} & = (1, 3, 4, 2, 5, 6) & \qquad \hat{\bm{u}}_{9} & = (1, 3, 4, 2, 6, 5) \\
  \hat{\bm{u}}_{4} & = (1, 4, 3, 2, 5, 6) & \qquad \hat{\bm{u}}_{10} & = (1, 4, 3, 2, 6, 5) \\
  \hat{\bm{u}}_{5} & = (1, 4, 2, 3, 5, 6) & \qquad \hat{\bm{u}}_{11} & = (1, 4, 2, 3, 6, 5) \\
  \hat{\bm{u}}_{6} & = (1, 2, 4, 3, 5, 6) & \qquad \hat{\bm{u}}_{12} & = (1, 2, 4, 3, 6, 5).
\end{alignat}
The above one-to-one correspondence means that each vector $\hat{\bm{u}}$ above corresponds to a circular demand with $\mathcal{D}_1 = \{1, 2, 3, 4\}$. For example, vector $\hat{\bm{u}}_1$ corresponds to the demand $\bm{d}_1 = (1234, 2345, 3456, 1456, 1256, 1236)$. In addition, for each vector $\hat{\bm{u}}$, we consider as user permutations the $K$ circular shifts of $\hat{\bm{u}}$. For instance, when we consider the circular demand associated to the vector $\hat{\bm{u}}_1$, the user permutations are given by the following set
\begin{equation}
    \begin{aligned}
        \mathcal{U}_{\bm{d}_1} = \{ &(1, 2, 3, 4, 5, 6), (2, 3, 4, 5, 6, 1), (3, 4, 5, 6, 1, 2), \\
        & (4, 5, 6, 1, 2, 3), (5, 6, 1, 2, 3, 4), (6, 1, 2, 3, 4, 5)\}.
    \end{aligned}
\end{equation}
Consider the subfile $W_{\mathcal{S}, \mathcal{T}}$ with $\mathcal{T} \subseteq \mathcal{S} \setminus \{1\}$ and $|\mathcal{T}| = t$. Following the same line of reasoning as in~\Cref{sec: Proof of the Lower Bound}, and taking advantage of \Cref{lem: Circular Shift Lemma} as well as focusing on circular demands with $\mathcal{D}_1 = \mathcal{S} = \{1, 2, 3, 4\}$, we see that this subfile is counted $(K - \ell)$ times when we consider the $K$ circular shifts of each $\hat{\bm{u}}$, where
\begin{equation}
    \ell = \max_{j \in \mathcal{T}} \left| \bigcup_{i = 2}^{\hat{u}(j)} \{\hat{u}_i\} \right|
\end{equation}
and where we again used that $\hat{u}(1) = 1$. For example, consider $\hat{\bm{u}}_{4} = (1, 4, 3, 2, 5, 6)$ and the subfile $W_{\mathcal{S}, 24}$. Since $\ell = \max_{j \in \mathcal{T}} \left| \bigcup_{i = 2}^{\hat{u}(j)} \{\hat{u}_i\} \right| = \hat{u}(2) - \hat{u}(1) = 3$, the subfile $W_{\mathcal{S}, 24}$ is counted a total of $(K - \ell) = 3$ times across the $K$ circular shifts of $\hat{\bm{u}}_{4}$.

Considering that $\ell \in [t : \alpha - 1] = [t : 3]$, as already explained in the general description of the main proof of the converse, we see that each subfile $W_{\mathcal{S}, \mathcal{T}}$ with $|\mathcal{T}| = t$ and $\mathcal{T} \subseteq \mathcal{S} \setminus \{1\}$ is counted a total of $\sum_{\ell = t}^{3}a_{\ell}(K - \ell)$ times in~\eqref{eqn: Complete Lower Bound 2} when going over the circular demands having $\mathcal{D}_1 = \mathcal{S} = \{1, 2, 3, 4\}$. The term
\begin{equation}
  a_{\ell} = t!(\alpha - 1 - t)!(K - \alpha)!\binom{\ell - 1}{t - 1} = t!(3 - t)!2!\binom{\ell - 1}{t - 1}
\end{equation}
counts the total number of vectors $\hat{\bm{u}}$ (and consequently the number of circular demands $\bm{d} \in \mathcal{C}$) corresponding to $\mathcal{D}_{1} = \{1, 2, 3, 4\}$ and $\max_{j \in \mathcal{T}} \left| \bigcup_{i = 2}^{\hat{u}(j)} \{\hat{u}_i\} \right| = \ell$. Since the same reasoning applies whenever the file $W_{\mathcal{S}}$ is requested by any other of the $(\alpha - 1 - t)$ users in $\mathcal{S} \setminus \{\mathcal{T}, 1\}$, we can see that each subfile $W_{\mathcal{S}, \mathcal{T}}$ appears $(4 - t)\sum_{\ell = t}^{3}a_{\ell}(K - \ell)$ times in \eqref{eqn: Complete Lower Bound 2} when we consider all circular demands in $\mathcal{C}$. Similarly, the same reasoning applies to any other file in $\mathcal{L}$. Consequently, the expression in \eqref{eqn: Complete Lower Bound 2} simplifies as
\begin{align}
    R^\star & \geq \frac{1}{B\sum_{\bm{d} \in \mathcal{C}} |\mathcal{U}_{\bm{d}}|} \sum_{\bm{d} \in \mathcal{C}}  \sum_{\bm{u} \in \mathcal{U}_{\bm{d}}} \sum_{k \in [K]} \sum_{\mathcal{T} \subseteq [K] \setminus \{u_{1}, \dots, u_{k}\} \cap \mathcal{D}_{u_{k}}} \left| W_{\mathcal{D}_{u_{k}}, \mathcal{T}} \right| \\
    & = \frac{1}{720B} \sum_{\bm{d} \in \mathcal{C}} \sum_{\bm{u} \in \mathcal{U}_{\bm{d}}} \sum_{k \in [K]} \sum_{\mathcal{T} \subseteq [K] \setminus \{u_{1}, \dots, u_{k}\} \cap \mathcal{D}_{u_{k}}} \left| W_{\mathcal{D}_{u_{k}}, \mathcal{T}} \right| \\
    & = \sum_{t = 0}^{\alpha}f(t)x_{t}
\end{align}
where we defined
\begin{align}
  c_{t} & \coloneqq \frac{(4 - t)}{720}\sum_{\ell = t}^{\alpha - 1}a_{\ell}(K - \ell) \\
  f(t) & \coloneqq Nc_{t} \\
  0 \leq x_{t} & \coloneqq \sum_{\mathcal{S} \subseteq [K] : |\mathcal{S}| = \alpha} \sum_{\mathcal{T} \subseteq \mathcal{S} : |\mathcal{T}| = t} \frac{\left| W_{\mathcal{S}, \mathcal{T}} \right|}{NB}.
\end{align}
At this point we can formulate the optimization problem as in \eqref{eqn: Optimization Problem} and solve it to obtain the converse.

\subsubsection{Solving the Optimization Problem}

Since the variable $x_{t}$ can be interpreted as a probability mass function and since --- as we recall from~\Cref{lem: Strictly Decreasing Sequence} --- the coefficients $f(t)$ represent a strictly decreasing convex sequence, we can conclude that the optimization problem can be easily solved by using Jensen's inequality and the cumulative cache size constraint. As shown in \Cref{sec: Proof of the Lower Bound}, the coefficients can be rewritten in the following form
\begin{equation}
  f(t) =  \frac{\binom{\alpha}{t + 1} + (K - \alpha)\binom{\alpha - 1}{t}}{\binom{\alpha}{t}}
\end{equation}
so we can obtain that the converse bound is a piece-wise linear curve with corner points $\left(t\frac{5}{2}, \frac{\binom{4}{t + 1} + 2\binom{3}{t}}{\binom{4}{t}}\right)$ for every $t \in [0 : 4]$.

\section{The Exact Memory-Load Trade-Off for~\texorpdfstring{$\alpha$}{alpha}-Demands}\label{sec: The Scheme for alpha-Demands}

We will here draw insights from the converse to establish a general cache placement policy, and then a delivery scheme that applies to a specific set of so-called \texorpdfstring{$\alpha$}{alpha}-Demands. For these demands and for the specific placement policy, the scheme will be proven optimal with the use of an additional converse. 

We start by noticing that the converse in~\Cref{thm: Lower Bound for the FDS Structure} can be decomposed as 
\begin{align} \label{eq:insight1}
  R_{\text{LB}}(t) & = \frac{\binom{\alpha}{t + 1}}{\binom{\alpha}{t}} + \frac{(K - \alpha)\binom{\alpha - 1}{t}}{\binom{\alpha}{t}}
\end{align}
with the second term $R_{2}(t) \coloneqq \frac{(K - \alpha)\binom{\alpha - 1}{t}}{\binom{\alpha}{t}} = (K-\alpha)(1-\gamma_\alpha)$ bringing to mind uncoded delivery to $(K - \alpha)$ users, and with the first term $R_{1}(t) \coloneqq \frac{\binom{\alpha}{t + 1}}{\binom{\alpha}{t}} = \frac{\alpha(1 - \gamma_\alpha)}{1 + \alpha\gamma_\alpha}$ bringing to mind a smaller MAN placement-and-delivery (unselfish) problem with $\alpha$ users, a common library, and normalized cache size $\gamma_\alpha$. Let us exploit this observation to suggest a placement. 

\subsection{Cache Placement}\label{sec: Cache Placement Procedure}

As noted, we can think of the $ R_{1}(t)$ term as representing the optimal load in a ``smaller'' $\alpha$-MAN problem with $\alpha$ users that \emph{are known in advance} to be interested in a common class of files and thus benefit from the corresponding $\alpha$-user MAN placement. If each user --- as is the case in our setting --- can allocate a fraction $\gamma_\alpha$ for each file of potential interest, then a MAN placement implies that each user stores a total of $f\gamma_\alpha$ files\footnote{Recall that $f$ is the total number of files in this common class of files.}. Here, in our effort to provide a placement method, we must account for the fact that there is a total of $\binom{K}{\alpha}$ such ``smaller'' MAN problems, because there are $C = \binom{K}{\alpha}$ file classes. Let us now recall that each user appears in a total of $\binom{K-1}{\alpha-1}$ such smaller problems, since there are $\binom{K-1}{\alpha-1}$ file classes that each user is interested in. Our placement must account for the possibility of each user participating in any such smaller problem. This requires each user to store $f\gamma_\alpha$ files per class of interest, and thus requires a total storage capacity of $\gamma_\alpha f\binom{K - 1}{\alpha - 1} = M$, which, as we see, nicely satisfies the cache size constraint. This reasoning justifies the cache placement procedure that we present below.

In our proposed uncoded and selfish cache placement method, based on the same combinatorial argument of the MAN scheme, each user $k$ proceeds to cache only from $\mathcal{F}_{k}$. The placement begins by splitting each file into $\binom{\alpha}{t}$ non-overlapping subfiles as
\begin{equation}
  W_{i, \mathcal{S}} = \{W_{i, \mathcal{S}, \mathcal{T}} : \mathcal{T} \subseteq \mathcal{S}, |\mathcal{T}| = t\}, \quad \forall \mathcal{S} \subseteq [K] : |\mathcal{S}| = \alpha, \quad \forall i \in [f]
\end{equation}
and then is completed by filling the cache $\mathcal{Z}_{k}$ of each user $k \in [K]$ as
\begin{equation}
  \mathcal{Z}_{k} = \{W_{i, \mathcal{S}, \mathcal{T}}: i \in [f], \mathcal{S} \subseteq [K], |\mathcal{S}| = \alpha, \mathcal{T} \subseteq \mathcal{S}, |\mathcal{T}| = t, k \in \mathcal{S} \cap \mathcal{T}\}.
\end{equation}
Each cache stores $\binom{\alpha - 1}{t - 1}$ subfiles for each file in its FDS, thus abiding by the cache size constraint 
\begin{equation}
  |\mathcal{F}|\binom{\alpha - 1}{t - 1}\frac{B}{\binom{\alpha}{t}} = f\binom{K - 1}{\alpha - 1} \frac{t}{\alpha} B = MB.
\end{equation}

Next we describe the delivery scheme for a specific set of demands. Unfortunately, the above reasoning does not immediately reflect --- at least not to us --- a universal delivery solution for any set of demands. To the best of our understanding, our cache placement introduces the need to resolve a large number of non-isomorphic index coding problems. What the above observation does allow though is insight for the delivery for a specific class of demands, as we see below. 

\subsection{Delivery Scheme for the Set of \texorpdfstring{$\alpha$}{alpha}-Demands}

We now present the delivery method for the following class of demands.

\begin{definition}[$\alpha$-Demands]
  Considering the $(K, \alpha, f)$ FDS structure with $f \geq \alpha$, the demand defined by the vectors $\bm{d} = (\mathcal{D}_{1}, \dots, \mathcal{D}_{K})$ and $\bm{f} = (f_{1}, \dots, f_{K})$ is an $\alpha$-demand if and only if there exists at least one set of users $\mathcal{K} \subseteq [K]$ such that $|\mathcal{K}| = \alpha$, $f_{k_1} \neq f_{k_2}$ for any $k_1 \neq k_2$ with $k_1, k_2 \in \mathcal{K}$ and $\mathcal{D}_{k} = \mathcal{K}$ for all $k \in \mathcal{K}$.
\end{definition}

Such demands can exist only if $f \geq \alpha$.  Indeed  if we have at least $\alpha$ files per class, then we can have distinct demands where there exists at least one set of $\alpha$ users requesting distinct files, all belonging to the same file class.

Let $R_{\alpha, \text{c}}$ denote the worst-case load when only $\alpha$-demands are considered, and when the cache placement in~\Cref{sec: Cache Placement Procedure} is adopted. We are now ready to provide the exact characterization of optimal such load $R^\star_{\alpha, \text{c}}$.

\begin{proposition}[The Exact Memory-Load Trade-Off for $\alpha$-Demands Under the Presented Symmetric Placement]\label{thm: The Exact Worst-Case Load for alpha-Demands}
  Assuming the selfish and uncoded cache placement presented in \Cref{sec: Cache Placement Procedure}, the optimal worst-case communication load $R^{\star}_{\alpha, \text{c}}$ for the $(K, \alpha, f)$ FDS structure and $\alpha$-Demands is a piece-wise linear curve with corner points
  \begin{equation}
    (M, R^{\star}_{\alpha, \text{c}}) = \left(t \frac{N}{K}, \frac{\binom{\alpha}{t + 1} + (K - \alpha)\binom{\alpha - 1}{t}}{\binom{\alpha}{t}}\right), \quad \forall t \in [0 : \alpha]
  \end{equation}
  again corresponding to
  \begin{equation}
    R^{\star}_{\alpha, \text{c}} = \frac{K(1 - \gamma_\alpha)}{K\gamma+1}\Big[(K - \alpha)\gamma+1\Big].
  \end{equation}
\end{proposition}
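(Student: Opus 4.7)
The plan is to prove both directions separately, guided by the decomposition $R_{\text{LB}} = R_{1}(t) + R_{2}(t)$ highlighted at the beginning of the section. The first summand will be achieved by MAN-style coded delivery restricted to the $\alpha$ users in the distinguished set $\mathcal{K}$, and the second by plain uncoded delivery to the remaining $K - \alpha$ users. A matching converse will then follow from an index-coding argument applied to a carefully chosen $\alpha$-demand.

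For achievability, I would fix an $\alpha$-demand and let $\mathcal{K}$ be its distinguished set, so the $\alpha$ users in $\mathcal{K}$ request distinct files from the common class $\mathcal{W}_{\mathcal{K}}$. The placement of Section~V-A, restricted to this class, coincides with the MAN placement for $\alpha$ users with normalized cache size $\gamma_{\alpha}$. I would therefore transmit, for every $\mathcal{Q} \subseteq \mathcal{K}$ with $|\mathcal{Q}| = t+1$, the XOR $X_{\mathcal{Q}} = \bigoplus_{k \in \mathcal{Q}} W_{f_{k}, \mathcal{K}, \mathcal{Q} \setminus \{k\}}$, so that the standard MAN decoding argument delivers every user in $\mathcal{K}$ at a load of $\binom{\alpha}{t+1}/\binom{\alpha}{t} = R_{1}(t)$. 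For each user $k \notin \mathcal{K}$, exactly $\binom{\alpha-1}{t}$ subfiles of the requested file are missing from its cache, and these would simply be sent uncoded, contributing $R_{2}(t) = (K-\alpha)\binom{\alpha-1}{t}/\binom{\alpha}{t}$. Summing yields the desired $R^{\star}_{\alpha,\text{c}}$.

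For the converse, I would apply the acyclic-subgraph bound of Lemma~1 to a specifically constructed $\alpha$-demand. Label the $K - \alpha$ users outside $\mathcal{K}$ as $u_{1}, \dots, u_{K-\alpha}$ and pick each demand class $\mathcal{D}_{u_{k}}$ to be an $\alpha$-subset of $\{u_{k}, u_{k+1}, \dots, u_{K}\}$ containing $u_{k}$, which is always possible because $K - k + 1 \geq \alpha$ in the relevant range. Place the $\alpha$ users of $\mathcal{K}$ last in the ordering $(u_{1}, \dots, u_{K-\alpha}, u_{K-\alpha+1}, \dots, u_{K})$. Under the placement of Section~V-A, the acyclic set from the proof of Theorem~1 contains, for each $u_{k} \in \mathcal{K}^{c}$, all $\binom{\alpha-1}{t}$ desired subfiles indexed by $\mathcal{T} \subseteq \mathcal{D}_{u_{k}} \setminus \{u_{k}\}$; and for the user of $\mathcal{K}$ in position $K - \alpha + j$ with $j \in [\alpha]$, exactly $\binom{\alpha-j}{t}$ subfiles. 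A single application of the hockey-stick identity gives $\sum_{j=1}^{\alpha} \binom{\alpha-j}{t} = \binom{\alpha}{t+1}$, so that the acyclic set has total size $\binom{\alpha}{t+1} + (K-\alpha)\binom{\alpha-1}{t}$ subfiles of length $B/\binom{\alpha}{t}$, which forces $R^{\star}_{\alpha,\text{c}} \geq R_{\text{LB}}$.

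The main obstacle I anticipate is on the converse side: I must verify that the user ordering and the choices of $\mathcal{D}_{u_{k}}$ really do produce an acyclic subgraph of the claimed size, in particular checking that within $\mathcal{K}^{c}$ no demand class $\mathcal{D}_{u_{k}}$ overlaps any previously ordered user except $u_{k}$ itself, and simultaneously that the $\alpha$-demand condition is preserved. Once this combinatorial bookkeeping is settled, the hockey-stick identity produces the closed form cleanly, and the coincidence of achievability and converse delivers the optimality claim.
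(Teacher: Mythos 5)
Your proposal is correct and follows essentially the same route as the paper: achievability via MAN-style XORs over the distinguished set $\mathcal{K}$ plus uncoded delivery to the remaining $K-\alpha$ users, and a converse via the acyclic-subgraph index-coding bound applied to an $\alpha$-demand in which each outside user's class avoids all previously ordered users, with the hockey-stick identity giving $\binom{\alpha}{t+1}$. The only cosmetic difference is that you obtain acyclicity from a single global user permutation, whereas the paper splits the vertex set into $\mathcal{J}_1$ and $\mathcal{J}_2$ and notes there are no edges from $\mathcal{J}_2$ to $\mathcal{J}_1$; your choice of $\mathcal{D}_{u_k}$ as any $\alpha$-subset of $\{u_k,\dots,u_K\}$ containing $u_k$ subsumes the paper's specific instance $\mathcal{D}_k=\mathcal{S}\cup\{k\}$ with $\mathcal{S}\subseteq\mathcal{K}$.
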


\begin{proof}
  The proof of the converse is reported in~\refappendix{app: Converse Proof of the Exact Worst-Case Load for alpha-Demands}, whereas the proof of the achievability is reported below in \Cref{sec: Achievability Proof of the Exact Worst-Case Load for alpha-Demands}.
\end{proof} 

\subsection{Achievability Proof of \texorpdfstring{\Cref{thm: The Exact Worst-Case Load for alpha-Demands}}{Proposition \ref{thm: The Exact Worst-Case Load for alpha-Demands}}}\label{sec: Achievability Proof of the Exact Worst-Case Load for alpha-Demands}
By definition, any $\alpha$-demand has at least one set of $\alpha$ users requesting distinct files from the same file class. If we denote by $\mathcal{K}$ one of such sets, it holds that $\mathcal{D}_{k} = \mathcal{K}$ for all $k \in \mathcal{K}$ and $f_{k_1} \neq f_{k_2}$ for all $k_1 \neq k_2$ with $k_1, k_2 \in \mathcal{K}$. Consider user $k \in \mathcal{K}$. According to the cache placement procedure in \Cref{sec: Cache Placement Procedure}, this user does not have in its cache any subfile $W_{f_{k}, \mathcal{K}, \mathcal{T}}$ where $\mathcal{T} \subseteq \mathcal{K}$, $|\mathcal{T}| = t$ and $k \notin \mathcal{T}$. If we focus on this set $\mathcal{K}$ of $\alpha$ users only, we can automatically construct the following sequence of multicast messages
\begin{equation}
    X_{\mathcal{K}} = \left(\bigoplus_{k \in \mathcal{S}}W_{f_{k}, \mathcal{K}, \mathcal{S} \setminus \{k\}} : \mathcal{S} \subseteq \mathcal{K}, |\mathcal{S}| = t + 1\right).
\end{equation}
For the remaining $(K - \alpha)$ users in $[K] \setminus \mathcal{K}$, we consider the following sequence 
\begin{equation}
    X_{[K] \setminus \mathcal{K}} = \left(W_{f_k, \mathcal{D}_k, \mathcal{T}} : k \in [K] \setminus \mathcal{K}, k \notin \mathcal{T} \right)
\end{equation}
of uncoded transmissions. Then, the transmitter delivers the concatenated $X = (X_{\mathcal{K}}, X_{[K] \setminus \mathcal{K}})$, inducing a load 
\begin{equation}
    \frac{|X|}{B} = \frac{|X_{\mathcal{K}}| + |X_{[K] \setminus \mathcal{K}}| }{B} = \frac{\binom{\alpha}{t + 1} + (K - \alpha)\binom{\alpha - 1}{t}}{\binom{\alpha}{t}}
\end{equation}
which implies that $R^{\star}_{\alpha, \text{c}}(t) \leq \frac{\binom{\alpha}{t + 1} + (K - \alpha)\binom{\alpha - 1}{t}}{\binom{\alpha}{t}}$ for all $t \in [0 : \alpha]$.

\subsection{Example of the Achievable Scheme}

Consider the $(K, \alpha, f) = (5, 3, 3)$ FDS structure. We have $C = \binom{K}{\alpha} = \binom{5}{3} = 10$ classes of files with a total of $N = fC = 30$ files. The FDS structure is given by
\begin{align}
  \mathcal{F}_{1} & = \{\mathcal{W}_{123}, \mathcal{W}_{124}, \mathcal{W}_{125}, \mathcal{W}_{134}, \mathcal{W}_{135}, \mathcal{W}_{145}\} \\
  \mathcal{F}_{2} & = \{\mathcal{W}_{123}, \mathcal{W}_{124}, \mathcal{W}_{125}, \mathcal{W}_{234}, \mathcal{W}_{235}, \mathcal{W}_{245}\} \\
  \mathcal{F}_{3} & = \{\mathcal{W}_{123}, \mathcal{W}_{134}, \mathcal{W}_{135}, \mathcal{W}_{234}, \mathcal{W}_{235}, \mathcal{W}_{345}\} \\
  \mathcal{F}_{4} & = \{\mathcal{W}_{124}, \mathcal{W}_{134}, \mathcal{W}_{145}, \mathcal{W}_{234}, \mathcal{W}_{245}, \mathcal{W}_{345}\} \\
  \mathcal{F}_{5} & = \{\mathcal{W}_{125}, \mathcal{W}_{135}, \mathcal{W}_{145}, \mathcal{W}_{235}, \mathcal{W}_{245}, \mathcal{W}_{345}\}
\end{align}
where $\mathcal{W}_{\mathcal{S}} = \{W_{1, \mathcal{S}}, W_{2, \mathcal{S}}, W_{3, \mathcal{S}}\}$ for each triplet $\mathcal{S} \subseteq [5]$. Let us consider the scenario of $t = 2$. In this case, each file is split as
\begin{alignat}{2}
  W_{i, 123} & = \{W_{i, 123, 12}, W_{i, 123, 13}, W_{i, 123, 23}\} & \quad W_{i, 145} & = \{W_{i, 145, 14}, W_{i, 145, 15}, W_{i, 145, 45}\} \\
  W_{i, 124} & = \{W_{i, 124, 12}, W_{i, 124, 14}, W_{i, 124, 24}\} & \quad W_{i, 234} & = \{W_{i, 234, 23}, W_{i, 234, 24}, W_{i, 234, 34}\} \\
  W_{i, 125} & = \{W_{i, 125, 12}, W_{i, 125, 15}, W_{i, 125, 25}\} & \quad W_{i, 235} & = \{W_{i, 235, 23}, W_{i, 235, 25}, W_{i, 235, 35}\} \\
  W_{i, 134} & = \{W_{i, 134, 13}, W_{i, 134, 14}, W_{i, 134, 34}\} & \quad W_{i, 245} & = \{W_{i, 245, 24}, W_{i, 245, 25}, W_{i, 245, 45}\} \\
  W_{i, 135} & = \{W_{i, 135, 13}, W_{i, 135, 15}, W_{i, 135, 35}\} & \quad W_{i, 345} & = \{W_{i, 345, 34}, W_{i, 345, 35}, W_{i, 345, 45}\}
\end{alignat}
for all $i \in [3]$.

Consider the demand defined by $\bm{f} = (1, 2, 3, 1, 1)$ and $\bm{d} = (123, 123, 123, 124, 125)$. This is an $\alpha$-demand because there exists a set $\mathcal{K}$ of $\alpha$ users all requesting distinct files belonging to the same file class $\mathcal{K}$. Here this set is $\mathcal{K} = \{1, 2, 3\}$.

In accordance to the described selfish and uncoded cache placement, each user desires a total of $\binom{\alpha - 1}{t}$ subfiles which are not in its cache. In this case, each user simply desires $\binom{2}{2} = 1$ subfile. The delivery of these subfiles involves the following MAN XOR 
\begin{align}
  X_{123} & = \left(\bigoplus_{k \in \mathcal{S}}W_{f_{k}, \mathcal{K}, \mathcal{S} \setminus \{k\}} : \mathcal{S} \subseteq \mathcal{K}, |\mathcal{S}| = t + 1\right)\\
                    & = \left(W_{1, 123, 23} \oplus W_{2, 123, 13} \oplus W_{3, 123, 12}\right)
\end{align}
and then the following two uncoded transmissions
\begin{align}
  X_{45} & = (W_{1, 124, 12}, W_{1, 125, 12})
\end{align}
that serve the users outside $\mathcal{K}$. Given that the subpacketization is $\binom{\alpha}{t} = \binom{3}{2} = 3$, the transmitted signal $X = (X_{123}, X_{45})$ induces a communication load of $R_{\alpha, \text{c}}(2) = |X|/B = 1$ which matches the corresponding optimal $R^{\star}_{\alpha, \text{c}}(2)$ from~\Cref{thm: The Exact Worst-Case Load for alpha-Demands}.

\section{Additional Optimal Schemes for Circular Demands}\label{sec: Some Optimal Schemes for Circular Demands}

We here present schemes that optimally deliver circular demands. We will do so for the $(5, 4, f)$ FDS structure with $t \in \{2, 3\}$, and for the $(6, 5, f)$ FDS structure with $t = 3$. The optimal schemes assume the selfish and uncoded cache placement described in~\Cref{sec: Cache Placement Procedure}. We prove optimality simply by showing that the load provided by the proposed achievable schemes matches the converse bound in~\Cref{thm: Lower Bound for the FDS Structure}. This suffices because, as we might recall, the construction of the converse employed only circular demands\footnote{We urge the reader not to conclude from this statement that the main converse of this work holds only for circular demands. On the contrary, the converse bounds the worst-case load, without any consideration of the type of demand. The fact though that the construction of the converse employed solely circular demands allows us to use this same converse to prove that the communication load presented here is indeed the smallest among all possible delivery schemes for circular demands, even if we do not know whether circular demands belong to the class of worst-case demands or not.}.

\subsection{Circular Demands and the \texorpdfstring{$(5, 4, f)$}{(5, 4, f)} FDS Structure}

The scheme presented here is a generalization, for any circular demand, of the example in~\Cref{sec: Motivating Example}. For the considered $(K, \alpha, f) = (5, 4, f)$ structure, we know that there are $C = \binom{K}{\alpha} = 5$ file classes $\mathcal{W}_{1234}, \mathcal{W}_{1235}, \mathcal{W}_{1245}, \mathcal{W}_{1345}, \mathcal{W}_{2345}$, corresponding to $N = fC = 5f$ files. The $5$ FDSs take the form
\begin{align}
\mathcal{F}_{1} &= \{\mathcal{W}_{1234}, \mathcal{W}_{1235}, \mathcal{W}_{1245}, \mathcal{W}_{1345}\} \\
\mathcal{F}_{2} &= \{\mathcal{W}_{1234}, \mathcal{W}_{1235}, \mathcal{W}_{1245}, \mathcal{W}_{2345}\} \\
\mathcal{F}_{3} &= \{\mathcal{W}_{1234}, \mathcal{W}_{1235}, \mathcal{W}_{1345}, \mathcal{W}_{2345}\} \\
\mathcal{F}_{4} &= \{\mathcal{W}_{1234}, \mathcal{W}_{1245}, \mathcal{W}_{1345}, \mathcal{W}_{2345}\} \\
\mathcal{F}_{5} &= \{\mathcal{W}_{1235}, \mathcal{W}_{1245}, \mathcal{W}_{1345}, \mathcal{W}_{2345}\}
\end{align}
where we recall that $\mathcal{W}_{\mathcal{S}} = \{W_{i, \mathcal{S}} : i \in [f]\}$. The scheme works for any value of $f \in \mathbb{Z}^{+}$, and for any circular demand. Such general circular demand is identified by the vector $\hat{\bm{u}} = (\hat{u}_1, \hat{u}_2, \hat{u}_3, \hat{u}_4, \hat{u}_5)$ that must satisfy the property (cf.~\Cref{def: Circular Demands}) that $\cup_{i = k + 1}^{k + 3} \{\hat{u}_{i}\} = \mathcal{D}_{\hat{u}_{k}} \setminus \{\hat{u}_{k}\}$ for each $k \in [5]$. The FDS request graph of such generic circular demand is shown in~\Cref{fig: FDS Request Graph for a Generic Circular Demand and First FDS Structure}. Notice that, as expected, this is not a complete graph.

\begin{figure}[!htb]
\centering
\begin{tikzpicture}
  \node[regular polygon, regular polygon sides = 5, inner sep = 1.5cm](G){};
  \node[draw, circle, label = center:$W_{f_{\hat{u}_1}, \mathcal{D}_{\hat{u}_1}}$, minimum size = 1.6cm](1) at (G.corner 1){};
  \node[draw, circle, label = center:$W_{f_{\hat{u}_5}, \mathcal{D}_{\hat{u}_5}}$, minimum size = 1.6cm](5) at (G.corner 2){};
  \node[draw, circle, label = center:$W_{f_{\hat{u}_4}, \mathcal{D}_{\hat{u}_4}}$, minimum size = 1.6cm](4) at (G.corner 3){};
  \node[draw, circle, label = center:$W_{f_{\hat{u}_3}, \mathcal{D}_{\hat{u}_3}}$, minimum size = 1.6cm](3) at (G.corner 4){};
  \node[draw, circle, label = center:$W_{f_{\hat{u}_2}, \mathcal{D}_{\hat{u}_2}}$, minimum size = 1.6cm](2) at (G.corner 5){};
  \node at (1.north east)[font = \small, anchor = west]{$\hat{u}_1$};
  \node at (2.north east)[font = \small, anchor = west]{$\hat{u}_2$};
  \node at (3.south east)[font = \small, anchor = west]{$\hat{u}_3$};
  \node at (4.south west)[font = \small, anchor = east]{$\hat{u}_4$};
  \node at (5.north west)[font = \small, anchor = east]{$\hat{u}_5$};
  \draw[-stealth](1)--(2); \draw[-stealth](2)--(3); \draw[-stealth](3)--(4); \draw[-stealth](4)--(5); \draw[-stealth](5)--(1);
  \draw[stealth-stealth](1)--(4); \draw[stealth-stealth](2)--(5); \draw[stealth-stealth](3)--(1); \draw[stealth-stealth](4)--(2); \draw[stealth-stealth](5)--(3);
\end{tikzpicture}
\caption{FDS request graph for a generic circular demand identified by the vector $\hat{\bm{u}} = (\hat{u}_1, \hat{u}_2, \hat{u}_3, \hat{u}_4, \hat{u}_5)$ and the $(K, \alpha, f) = (5, 4, f)$ FDS structure.}
\label{fig: FDS Request Graph for a Generic Circular Demand and First FDS Structure}
\end{figure}
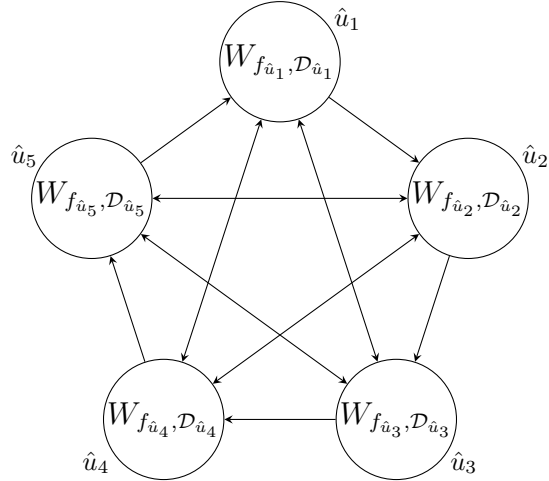

\subsubsection{The case of \texorpdfstring{$t = 2$}{t = 2}}

According to the cache placement in \Cref{sec: Cache Placement Procedure}, each file is split into $\binom{\alpha}{t} = \binom{4}{2} = 6$ non-overlapping subfiles as
\begin{align}
    W_{i, 1234} & = \{W_{i, 1234, 12}, W_{i, 1234, 13}, W_{i, 1234, 14}, W_{i, 1234, 23}, W_{i, 1234, 24}, W_{i, 1234, 34}\} \\
    W_{i, 1235} & = \{W_{i, 1235, 12}, W_{i, 1235, 13}, W_{i, 1235, 15}, W_{i, 1235, 23}, W_{i, 1235, 25}, W_{i, 1235, 35}\} \\
    W_{i, 1245} & = \{W_{i, 1245, 12}, W_{i, 1245, 14}, W_{i, 1245, 15}, W_{i, 1245, 24}, W_{i, 1245, 25}, W_{i, 1245, 45}\} \\
    W_{i, 1345} & = \{W_{i, 1345, 13}, W_{i, 1345, 14}, W_{i, 1345, 15}, W_{i, 1345, 34}, W_{i, 1345, 35}, W_{i, 1345, 45}\} \\
    W_{i, 2345} & = \{W_{i, 2345, 23}, W_{i, 2345, 24}, W_{i, 2345, 25}, W_{i, 2345, 34}, W_{i, 2345, 35}, W_{i, 2345, 45}\}
\end{align}
for each $i \in [f]$. Considering then that the cache content of each user $k \in [5]$ is filled as
\begin{equation}
  \mathcal{Z}_{k} = \{W_{i, \mathcal{S}, \mathcal{T}}: i \in [f], \mathcal{S} \subseteq [5], |\mathcal{S}| = 4, \mathcal{T} \subseteq \mathcal{S}, |\mathcal{T}| = 2, k \in \mathcal{S} \cap \mathcal{T}\}
\end{equation}
it can be easily seen that each user desires a total of $\binom{\alpha - 1}{t} = \binom{3}{2} = 3$ subfiles, each of size $B/6$ bits. Recalling that the vector of ordered users $\hat{\bm{u}} = (\hat{u}_1, \hat{u}_2, \hat{u}_3, \hat{u}_4, \hat{u}_5)$ satisfies $\mathcal{D}_{\hat{u}_{k}} = \cup_{i = k}^{k + 3} \{\hat{u}_{i}\}$ for each $k \in [5]$, we conclude that the subfiles desired by each user are the following.
\begin{itemize}
    \item User $\hat{u}_1$ desires the subfiles $W_{f_{\hat{u}_1}, \mathcal{D}_{\hat{u}_1}, \{\hat{u}_2, \hat{u}_3\}}$, $W_{f_{\hat{u}_1}, \mathcal{D}_{\hat{u}_1}, \{\hat{u}_2, \hat{u}_4\}}$ and $W_{f_{\hat{u}_1}, \mathcal{D}_{\hat{u}_1}, \{\hat{u}_3, \hat{u}_4\}}$.
    \item User $\hat{u}_2$ desires the subfiles $W_{f_{\hat{u}_2}, \mathcal{D}_{\hat{u}_2}, \{\hat{u}_3, \hat{u}_4\}}$, $W_{f_{\hat{u}_2}, \mathcal{D}_{\hat{u}_2}, \{\hat{u}_3, \hat{u}_5\}}$ and $W_{f_{\hat{u}_2}, \mathcal{D}_{\hat{u}_2}, \{\hat{u}_4, \hat{u}_5\}}$.
    \item User $\hat{u}_3$ desires the subfiles $W_{f_{\hat{u}_3}, \mathcal{D}_{\hat{u}_3}, \{\hat{u}_1, \hat{u}_4\}}$, $W_{f_{\hat{u}_3}, \mathcal{D}_{\hat{u}_3}, \{\hat{u}_1, \hat{u}_5\}}$ and $W_{f_{\hat{u}_3}, \mathcal{D}_{\hat{u}_3}, \{\hat{u}_4, \hat{u}_5\}}$.
    \item User $\hat{u}_4$ desires the subfiles $W_{f_{\hat{u}_4}, \mathcal{D}_{\hat{u}_4}, \{\hat{u}_1, \hat{u}_2\}}$, $W_{f_{\hat{u}_4}, \mathcal{D}_{\hat{u}_4}, \{\hat{u}_1, \hat{u}_5\}}$ and $W_{f_{\hat{u}_4}, \mathcal{D}_{\hat{u}_4}, \{\hat{u}_2, \hat{u}_5\}}$.
    \item User $\hat{u}_5$ desires the subfiles $W_{f_{\hat{u}_5}, \mathcal{D}_{\hat{u}_5}, \{\hat{u}_1, \hat{u}_2\}}$, $W_{f_{\hat{u}_5}, \mathcal{D}_{\hat{u}_5}, \{\hat{u}_1, \hat{u}_3\}}$ and $W_{f_{\hat{u}_5}, \mathcal{D}_{\hat{u}_5}, \{\hat{u}_2, \hat{u}_3\}}$. 
\end{itemize}
These subfiles are delivered by the following sequence of XORs
\begin{align}
    X_1 & = \Wtwo{1}{2}{3} \oplus \Wtwo{2}{3}{5} \oplus \Wtwo{3}{1}{4} \\
    X_2 & = \Wtwo{3}{1}{4} \oplus \Wtwo{1}{2}{4} \oplus \Wtwo{4}{1}{2} \\
    X_3 & = \Wtwo{2}{3}{5} \oplus \Wtwo{5}{1}{3} \oplus \Wtwo{3}{1}{5} \\
    X_4 & = \Wtwo{1}{3}{4} \oplus \Wtwo{4}{1}{5} \\
    X_5 & = \Wtwo{2}{3}{4} \oplus \Wtwo{4}{2}{5} \\
    X_6 & = \Wtwo{2}{4}{5} \oplus \Wtwo{5}{1}{2} \\
    X_7 & = \Wtwo{3}{4}{5} \oplus \Wtwo{5}{2}{3}
\end{align}
after which each user $k \in [5]$ can employ its own cache content $\mathcal{Z}_k$ to decode as follows.
\begin{itemize}
    \item User $\hat{u}_1$ recovers its desired subfiles from $X_1 \oplus X_3$, $X_2$ and $X_4$.
    \item User $\hat{u}_2$ recovers its desired subfiles from $X_1 \oplus X_2$, $X_5$ and $X_6$.
    \item User $\hat{u}_3$ recovers its desired subfiles from $X_1$, $X_3$ and $X_7$.
    \item User $\hat{u}_4$ recovers its desired subfiles from $X_2$, $X_4$ and $X_5$.
    \item User $\hat{u}_5$ recovers its desired subfiles from $X_3$, $X_6$ and $X_7$.
\end{itemize}
Given that $|X_i| = B/6$ for each $i \in [7]$, and given that there are $7$ transmissions, we have a load of $R(2) = |X|/B = 7/6$. Since then $R_{\text{LB}}(2) = 7/6$, we can conclude that the converse is tight.

\subsubsection{The case of \texorpdfstring{$t = 3$}{t = 3}}

In this case each file is split into $\binom{4}{3} = 4$ non-overlapping subfiles as
\begin{align}
    W_{i, 1234} & = \{W_{i, 1234, 123}, W_{i, 1234, 124}, W_{i, 1234, 134}, W_{i, 1234, 234}\} \\
    W_{i, 1235} & = \{W_{i, 1235, 123}, W_{i, 1235, 125}, W_{i, 1235, 135}, W_{i, 1235, 235}\} \\
    W_{i, 1245} & = \{W_{i, 1245, 124}, W_{i, 1245, 125}, W_{i, 1245, 145}, W_{i, 1245, 245}\} \\
    W_{i, 1345} & = \{W_{i, 1345, 134}, W_{i, 1345, 135}, W_{i, 1345, 145}, W_{i, 1345, 345}\} \\
    W_{i, 2345} & = \{W_{i, 2345, 234}, W_{i, 2345, 235}, W_{i, 2345, 245}, W_{i, 2345, 345}\}
\end{align}
for each $i \in [f]$. Each user then desires $\binom{3}{3} = 1$ subfile of size $B/4$. More precisely, always considering the general circular demands identified by the vector $\hat{\bm{u}}$, the desired subfiles are given as follows.
\begin{itemize}
    \item User $\hat{u}_1$ desires $\Wthree{1}{2}{3}{4}$.
    \item User $\hat{u}_2$ desires $\Wthree{2}{3}{4}{5}$.
    \item User $\hat{u}_3$ desires $\Wthree{3}{1}{4}{5}$.
    \item User $\hat{u}_4$ desires $\Wthree{4}{1}{2}{5}$.
    \item User $\hat{u}_5$ desires $\Wthree{5}{1}{2}{3}$.
\end{itemize}
After transmitting the following two XORs
\begin{align}
    X_1 & = \Wthree{1}{2}{3}{4} \oplus \Wthree{2}{3}{4}{5} \oplus \Wthree{4}{1}{2}{5} \\
    X_2 & = \Wthree{1}{2}{3}{4} \oplus \Wthree{3}{1}{4}{5} \oplus \Wthree{5}{1}{2}{3} 
\end{align}
each user can decode as follows.
\begin{itemize}
    \item User $\hat{u}_1$ and user $\hat{u}_3$ recover their desired subfiles from $X_2$.
    \item User $\hat{u}_2$ and user $\hat{u}_4$ recover their desired subfiles from $X_1$.
    \item User $\hat{u}_5$ recovers its desired subfile from $X_1 \oplus X_2$.
\end{itemize}
Recalling that $|X_1| = |X_2| = B/4$, the $2$ transmissions correspond to a communication load $R(3) = |X|/B = 1/2$ which matches the converse $R_{\text{LB}}(3) = 1/2$. This means that the scheme is optimal among all the caching-and-delivery schemes that deliver circular demands.

\subsection{Circular Demands and the \texorpdfstring{$(6, 5, f)$}{(5, 4, f)} FDS Structure}

In this setting we consider the $(K, \alpha, f) = (6, 5, f)$ structure. Here, there are $C = 6$ classes of files $\mathcal{W}_{12345}, \mathcal{W}_{12346}, \mathcal{W}_{12356}, \mathcal{W}_{12456}, \mathcal{W}_{13456}, \mathcal{W}_{23456}$ and $N = fC = 6f$ files in total. The $6$ FDSs take the form
\begin{align}
    \mathcal{F}_{1} &= \{\mathcal{W}_{12345}, \mathcal{W}_{12346}, \mathcal{W}_{12356}, \mathcal{W}_{12456}, \mathcal{W}_{13456}\} \\
    \mathcal{F}_{2} &= \{\mathcal{W}_{12345}, \mathcal{W}_{12346}, \mathcal{W}_{12356}, \mathcal{W}_{12456}, \mathcal{W}_{23456}\} \\
    \mathcal{F}_{3} &= \{\mathcal{W}_{12345}, \mathcal{W}_{12346}, \mathcal{W}_{12356}, \mathcal{W}_{13456}, \mathcal{W}_{23456}\} \\
    \mathcal{F}_{4} &= \{\mathcal{W}_{12345}, \mathcal{W}_{12346}, \mathcal{W}_{12456}, \mathcal{W}_{13456}, \mathcal{W}_{23456}\} \\
    \mathcal{F}_{5} &= \{\mathcal{W}_{12345}, \mathcal{W}_{12356}, \mathcal{W}_{12456}, \mathcal{W}_{13456}, \mathcal{W}_{23456}\} \\
    \mathcal{F}_{6} &= \{\mathcal{W}_{12346}, \mathcal{W}_{12356}, \mathcal{W}_{12456}, \mathcal{W}_{13456}, \mathcal{W}_{23456}\} 
\end{align}
where $\mathcal{W}_{\mathcal{S}} = \{W_{i, \mathcal{S}} : i \in [f]\}$. As in the previous case, we here provide a scheme for any $f \in \mathbb{Z}^{+}$ and any circular demand. Each such circular demand is identified by a vector $\hat{\bm{u}} = (\hat{u}_1, \hat{u}_2, \hat{u}_3, \hat{u}_4, \hat{u}_5, \hat{u}_6)$, and it induces the FDS request graph in~\Cref{fig: FDS Request Graph for a Generic Circular Demand and Second FDS Structure}. The optimal scheme is provided for the case $t = 3$.

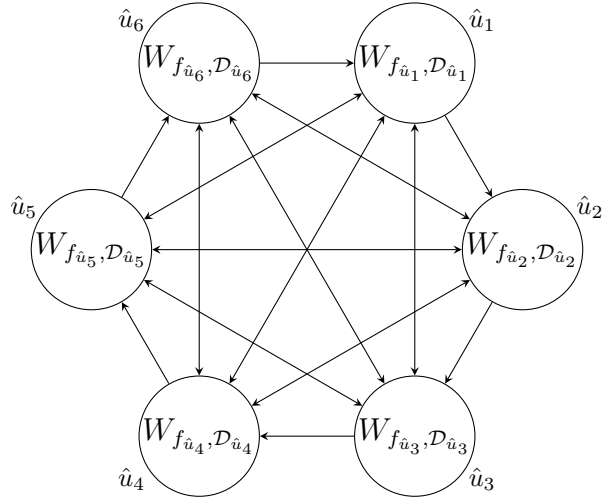
\begin{figure}[!htb]
\centering
\begin{tikzpicture}
  \node[regular polygon, regular polygon sides = 6, inner sep = 1.75cm](G){};
  \node[draw, circle, label = center:$W_{f_{\hat{u}_1}, \mathcal{D}_{\hat{u}_1}}$, minimum size = 1.6cm](1) at (G.corner 1){};
  \node[draw, circle, label = center:$W_{f_{\hat{u}_6}, \mathcal{D}_{\hat{u}_6}}$, minimum size = 1.6cm](6) at (G.corner 2){};
  \node[draw, circle, label = center:$W_{f_{\hat{u}_5}, \mathcal{D}_{\hat{u}_5}}$, minimum size = 1.6cm](5) at (G.corner 3){};
  \node[draw, circle, label = center:$W_{f_{\hat{u}_4}, \mathcal{D}_{\hat{u}_4}}$, minimum size = 1.6cm](4) at (G.corner 4){};
  \node[draw, circle, label = center:$W_{f_{\hat{u}_3}, \mathcal{D}_{\hat{u}_3}}$, minimum size = 1.6cm](3) at (G.corner 5){};
  \node[draw, circle, label = center:$W_{f_{\hat{u}_2}, \mathcal{D}_{\hat{u}_2}}$, minimum size = 1.6cm](2) at (G.corner 6){};
  \node at (1.north east)[font = \small, anchor = west]{$\hat{u}_1$};
  \node at (2.north east)[font = \small, anchor = west]{$\hat{u}_2$};
  \node at (3.south east)[font = \small, anchor = west]{$\hat{u}_3$};
  \node at (4.south west)[font = \small, anchor = east]{$\hat{u}_4$};
  \node at (5.north west)[font = \small, anchor = east]{$\hat{u}_5$};
  \node at (6.north west)[font = \small, anchor = east]{$\hat{u}_6$};
  \draw[-stealth](1)--(2); \draw[-stealth](2)--(3); \draw[-stealth](3)--(4); \draw[-stealth](4)--(5); \draw[-stealth](5)--(6); \draw[-stealth](6)--(1);
  \draw[stealth-stealth](1)--(3); \draw[stealth-stealth](1)--(4); \draw[stealth-stealth](1)--(5);
  \draw[stealth-stealth](2)--(4); \draw[stealth-stealth](2)--(5); \draw[stealth-stealth](2)--(6);
  \draw[stealth-stealth](3)--(5); \draw[stealth-stealth](3)--(6); 
  \draw[stealth-stealth](4)--(6);
\end{tikzpicture}
\caption{FDS request graph for a generic circular demand identified by the vector $\hat{\bm{u}} = (\hat{u}_1, \hat{u}_2, \hat{u}_3, \hat{u}_4, \hat{u}_5, \hat{u}_6)$ and the $(K, \alpha, f) = (6, 5, f)$ FDS structure.}
\label{fig: FDS Request Graph for a Generic Circular Demand and Second FDS Structure}
\end{figure}

According to the cache placement in \Cref{sec: Cache Placement Procedure}, each file is split into $\binom{\alpha}{t} = \binom{5}{3} = 10$ non-overlapping subfiles as
\begin{align}
    W_{i, \mathcal{S}} & = \{W_{i, \mathcal{S}, \mathcal{T}} : \mathcal{T} \subseteq \mathcal{S}, |\mathcal{T}| = 3\}, \quad \forall \mathcal{S} \subseteq [6] : |\mathcal{S}| = 5, \quad \forall i \in [f]
\end{align}
where each subfile has size $B/10$. For example, the file $W_{i, 12345}$ is split into $10$ non-overlapping subfiles labeled as $W_{i, 12345, \mathcal{T}}$ for each $\mathcal{T} \in \{123, 124, 125, 134, 135, 145, 234, 235, 245, 345\}$. We recall that the set $\mathcal{T}$ represents the users which the subfile $W_{i, 12345, \mathcal{T}}$ is exactly and uniquely cached at. If we consider a generic circular demand, each user misses $\binom{\alpha - 1}{t} = \binom{4}{3} = 4$ subfiles given by the following.
\begin{itemize}
    \item User $\hat{u}_1$ desires the subfiles $\Wthree{1}{2}{3}{4}$, $\Wthree{1}{2}{3}{5}$, $\Wthree{1}{2}{4}{5}$ and $\Wthree{1}{3}{4}{5}$.
    \item User $\hat{u}_2$ desires the subfiles $\Wthree{2}{3}{4}{5}$, $\Wthree{2}{3}{4}{6}$, $\Wthree{2}{3}{5}{6}$ and $\Wthree{2}{4}{5}{6}$.
    \item User $\hat{u}_3$ desires the subfiles $\Wthree{3}{1}{4}{5}$, $\Wthree{3}{1}{4}{6}$, $\Wthree{3}{1}{5}{6}$ and $\Wthree{3}{4}{5}{6}$.
    \item User $\hat{u}_4$ desires the subfiles $\Wthree{4}{1}{2}{5}$, $\Wthree{4}{1}{2}{6}$, $\Wthree{4}{1}{5}{6}$ and $\Wthree{4}{2}{5}{6}$.
    \item User $\hat{u}_5$ desires the subfiles $\Wthree{5}{1}{2}{3}$, $\Wthree{5}{1}{2}{6}$, $\Wthree{5}{1}{3}{6}$ and $\Wthree{5}{2}{3}{6}$.
    \item User $\hat{u}_6$ desires the subfiles $\Wthree{6}{1}{2}{3}$, $\Wthree{6}{1}{2}{4}$, $\Wthree{6}{1}{3}{4}$ and $\Wthree{6}{2}{3}{4}$.
\end{itemize}
If we consider the following linear combinations of subfiles
\begin{align}
    X_1 & = \Wthree{1}{2}{3}{4} \oplus \Wthree{2}{4}{5}{6} \oplus \Wthree{4}{2}{5}{6} \\
    X_2 & = \Wthree{4}{1}{5}{6} \oplus \Wthree{1}{2}{3}{5} \oplus \Wthree{5}{1}{2}{3} \\
    X_3 & = \Wthree{1}{2}{3}{4} \oplus \Wthree{4}{1}{5}{6} \oplus \Wthree{6}{1}{3}{4} \oplus \Wthree{3}{1}{4}{6} \\
    X_4 & = \Wthree{2}{3}{4}{5} \oplus \Wthree{3}{1}{5}{6} \oplus \Wthree{5}{1}{3}{6} \\
    X_5 & = \Wthree{5}{1}{2}{6} \oplus \Wthree{2}{3}{4}{6} \oplus \Wthree{6}{2}{3}{4} \\
    X_6 & = \Wthree{2}{3}{4}{5} \oplus \Wthree{5}{1}{2}{6} \oplus \Wthree{1}{2}{4}{5} \oplus \Wthree{4}{1}{2}{5} \\
    X_7 & = \Wthree{3}{4}{5}{6} \oplus \Wthree{4}{1}{2}{6} \oplus \Wthree{6}{1}{2}{4} \\
    X_8 & = \Wthree{6}{1}{2}{3} \oplus \Wthree{3}{1}{4}{5} \oplus \Wthree{1}{3}{4}{5} \\
    X_9 & = \Wthree{3}{4}{5}{6} \oplus \Wthree{6}{1}{2}{3} \oplus \Wthree{2}{3}{5}{6} \oplus \Wthree{5}{2}{3}{6}
\end{align}
and we denote by $X = (X_i : i \in [9])$ the concatenated message sent by the central server, then each user can correctly decode its desired subfiles as follows.
\begin{itemize}
    \item User $\hat{u}_1$ recovers its desired subfiles from $X_2$, $X_3$, $X_4 \oplus X_6$ and $X_8$.
    \item User $\hat{u}_2$ recovers its desired subfiles from $X_1$, $X_5$, $X_6$ and $X_7 \oplus X_9$.
    \item User $\hat{u}_3$ recovers its desired subfiles from $X_2 \oplus X_3$, $X_4$, $X_8$ and $X_9$.
    \item User $\hat{u}_4$ recovers its desired subfiles from $X_1$, $X_3$, $X_5 \oplus X_6$ and $X_7$.
    \item User $\hat{u}_5$ recovers its desired subfiles from $X_2$, $X_4$, $X_6$ and $X_8 \oplus X_9$.
    \item User $\hat{u}_6$ recovers its desired subfiles from $X_1 \oplus X_3$, $X_5$, $X_7$ and $X_9$.
\end{itemize}
The delivery procedure is slightly more involved with respect to the previous FDS structure. Indeed, the messages $X_i$ are carefully designed in such a way that also their linear combinations can be useful to some users. An equivalent interpretation of this fact is related to the previously mentioned creation of cliques. Consider for example the XOR $X_3$. User~$1$ and user~$4$ can directly cache-out interference to correctly decode their desired subfiles, while user~$3$ and user~$6$ miss in their cache --- due to the selfish cache placement --- some interfering messages appearing in the XOR $X_3$. Such interfering (and consequently undesired) messages are ``delivered'' to both user~$3$ and user~$6$ by means of XORs $X_2$ and $X_1$, hence allowing them to decode the desired subfiles from $X_3$. We can see here that with $X_3$ we are able serve the clique composed by user~$1$, user~$4$, user~$3$ and user~$6$, by carefully ``passing'' some undesired (and not cached) information to the last two users. A similar reasoning applies to the XORs $X_6$ and $X_9$, both of which are useful to $4$ users simultaneously.

The communication load is equal to $R(3) = |X|/B = 9/10$ and it matches the converse $R_{\text{LB}}(3) = 9/10$. Hence, the converse here is tight.

\section{Conclusions}\label{sec: Conclusion}

In this work, we investigated the effects that selfish caching can have on the optimal worst-case communication load in the coded caching framework. The proposed general FDS structure seeks to capture the degree of intersection between the interests of the different users. While somewhat restrictive, the proposed structure was designed to bring to the fore and accentuate the adversarial relationship between coded caching and selfish caching, and by doing so, to allow us to provide insight on the nature of this adversarial relationship.

This insight is provided with the introduction here of a new information-theoretic converse on the minimum worst-case communication load by means of index coding arguments. For the proposed broad FDS structure, the converse bound definitively resolves the question of whether selfish caching is generally beneficial or not. More specifically, the converse reveals that any non-zero load brought about by symmetrically selfish caching is always (with the exception of the extreme points of $t$) strictly worse than the optimal load guaranteed in the unselfish scenario.  The rationale behind this is that, despite the sizeable increase of local caching gain brought about by the very targeted placement of selfish caching, and despite a very restricted set of demands, the loss in multicasting opportunities is too severe. In fact, what the converse shows is that this damage is so prominent that --- for any fixed (or decreasing) ratio $\delta= \alpha/K < 1$ --- the coding gain does not scale with $K$, and is in fact bounded above by $1/(1 - \delta)$. In other words, even if there is, for example, a $\SI{99}{\percent}$ symmetric intersection between the interests of the users, the coding gain will not scale as $K$ increases.

The above realisation brings to the fore several interesting research directions for the future. One possibility is to study new, less restrictive FDS structures that could allow more flexibility in designing multicasting opportunities. One way to do this is to consider that each user does not have its own FDS. Clearly, a remaining challenge is to provide optimal schemes for any set of demands, either for the proposed or for another FDS structure.

\appendices

\section{Proof of \texorpdfstring{\Cref{lem: Circular Shift Lemma}}{Lemma~\ref{lem: Circular Shift Lemma}}}\label{app: Proof of the Circular Shift Lemma}

Let $\hat{\bm{u}} = (\hat{u}_{1}, \dots, \hat{u}_{K})$ be a permutation of the elements in $[K]$. Consider $k_1, k_2 \in [K]$ such that $k_1 \neq k_2$. Assume without loss of generality that $\hat{u}(k_1) < \hat{u}(k_2) \leq K$, in which case $\ell = \hat{u}(k_2) - \hat{u}(k_1)$. Denoting by $\mathcal{U}$ the set containing the $K$ circular shifts of the vector $\hat{\bm{u}}$, we see that there are $\ell$ vectors $\bm{u} \in \mathcal{U}$ such that $k_2$ appears before $k_1$ in $\bm{u}$. These cases correspond to the vectors $\bm{u} \in \mathcal{U}$ where we have in the first position of $\bm{u}$ either the element $k_2$, or any one of the $(\ell - 1)$ elements between $\hat{u}(k_1)$ and $\hat{u}(k_2)$ in the vector $\hat{\bm{u}}$. As a consequence, the total number of vectors $\bm{u} \in \mathcal{U}$ such that $k_1$ appears before $k_2$ is equal to $(K - \ell)$, which concludes the proof.

\section{Proof of \texorpdfstring{\Cref{lem: Strictly Decreasing Sequence}}{Lemma~\ref{lem: Strictly Decreasing Sequence}}}\label{app: Proof of the Strictly Decreasing Sequence Lemma}

The convexity of $f(t)$ can be easily shown by verifying that the second derivative $f''(t)$ with respect to $t$ is strictly positive for $t \geq 0$. Indeed, we have
\begin{align}
    f(t) & = \frac{\binom{\alpha}{t + 1} + (K - \alpha)\binom{\alpha - 1}{t}}{\binom{\alpha}{t}} = \frac{\alpha - t}{1 + t} + (K - \alpha)\left(1 - \frac{t}{\alpha}\right) \\
    f'(t) & = -\frac{1 + \alpha}{(1 + t)^2} - \frac{(K - \alpha)}{\alpha} \\
    f''(t) & = \frac{2(1 + \alpha)}{(1 + t)^3} > 0
\end{align}
where $f'(t)$ denotes the first derivative. 
Then, since $t \in [0 : \alpha]$, we can evaluate $f(0) = K$ and $f(\alpha) = 0$, showing that $f(0) > f(\alpha)$. Hence, since $f(t)$ is convex, it has to be also strictly decreasing for $t \in [0 : \alpha]$, otherwise the convexity property would be violated. This concludes the proof.

\section{Proof of \texorpdfstring{\Cref{cor: Comparison between Converse Bound and MAN Scheme}}{\ref{cor: Comparison between Converse Bound and MAN Scheme}}}\label{app: Proof of Comparison with MAN}

Recalling that $R_{\text{MAN}}(t) = \frac{\binom{K}{t + 1}}{\binom{K}{t}}$ as well as recalling the load expression for $R_{\text{LB}}(t)$ in \Cref{thm: Lower Bound for the FDS Structure}, we have that
\begin{align}
\frac{R^{\star}(t)}{R_{\text{MAN}}(t)} \geq \frac{R_{\text{LB}}(t)}{R_{\text{MAN}}(t)} & = \frac{\binom{\alpha}{t + 1} + (K - \alpha)\binom{\alpha - 1}{t}}{\binom{\alpha}{t}} \cdot \frac{\binom{K}{t}}{\binom{K}{t + 1}} \\
                                           & = \frac{\binom{\alpha}{t + 1}}{\binom{\alpha}{t}} \cdot \left(1 + (K - \alpha) \frac{t + 1}{\alpha}\right) \cdot \frac{\binom{K}{t}}{\binom{K}{t + 1}} \\
                                           & = \frac{\alpha - t}{K - t} \cdot \left(\frac{K(1 + t) - \alpha t}{\alpha}\right) \\
                                           & = 1 + \underbrace{\frac{t(K - \alpha)(\alpha - 1 - t)}{\alpha(K - t)}}_{\geq 0}
\end{align}
where the second term in the last expression is equal to $0$ either when $\alpha \in [K - 1]$ for $t \in \{0, \alpha - 1\}$, or when $\alpha = K$ and $f \geq K$ for any $t$. This concludes the proof.

\section{Proof of \texorpdfstring{\Cref{cor: Bound on the Coding Gain}}{Corollary~\ref{cor: Bound on the Coding Gain}}}\label{app: Proof of the Bound on the Coding Gain}

We know that the optimal coding gain is upper bounded as
\begin{equation}
    G^\star \leq \frac{t + 1}{1 + \frac{t}{K}(K - \alpha)} = \bar{G}(t).
\end{equation}
It can be easily verified that $\bar{G}''(t) < 0$ for $t \geq 0$, which means that $\bar{G}(t)$ is concave for positive values of $t$. Then, we can see that $\bar{G}(0) = 1$, whereas
\begin{equation}
    \lim_{t \to \infty} \bar{G}(t) = \frac{K}{K - \alpha} > 1
\end{equation}
for any $\alpha \in [K - 1]$. Consequently, $\bar{G}(t) < K/(K - \alpha)$, which means that $G^\star < 1/(1 - \delta)$ for any $\delta = \alpha/K$. This concludes the proof.

\section{Converse Proof of \texorpdfstring{\Cref{thm: The Exact Worst-Case Load for alpha-Demands}}{Proposition \ref{thm: The Exact Worst-Case Load for alpha-Demands}}}\label{app: Converse Proof of the Exact Worst-Case Load for alpha-Demands}

While the achievable expression matches exactly the converse expression $R_{\text{LB}}$, this latter converse cannot be used to prove the optimality of $R^{\star}_{\alpha, \text{c}}$, because $R_{\text{LB}}$ bounds the optimal \emph{worst-case} communication load. As there is no a priori guarantee that the $\alpha$-demands are part of the worst-case demands, we will here derive another bound that focuses on $\alpha$-demands to prove that the achievable performance is indeed optimal.

Following the same line of reasoning as in \Cref{sec: Proof of the Lower Bound}, we apply again the index coding lower bound in \Cref{thm: Acyclic Subgraph Converse Bound}, with the only difference being that now the cache placement is fixed. The corresponding index coding problem has $K' = K$ users and $N' = K\binom{\alpha - 1}{t}$ messages, where $\binom{\alpha - 1}{t}$ is the total number of subfiles desired by each user for a fixed value of $t \in [0 : \alpha]$. The desired message set and the side information set are respectively given by
\begin{align}
  \mathcal{M}_{k} & = \{W_{f_{k}, \mathcal{D}_{k}, \mathcal{T}}: \mathcal{T} \subseteq \mathcal{D}_{k}, |\mathcal{T}| = t, k \notin \mathcal{T}\} \\
  \mathcal{A}_{k} & = \{W_{i, \mathcal{S}, \mathcal{T}} : i \in [f], \mathcal{S} \subseteq [K], |\mathcal{S}| = \alpha, \mathcal{T} \subseteq \mathcal{S}, |\mathcal{T}| = t, k \in \mathcal{S} \cap \mathcal{T}\}
\end{align}
for all $k \in [K]$. In the corresponding side information graph, an edge exists from $W_{f_{k_{1}}, \mathcal{D}_{k_{1}}, \mathcal{T}_{1}}$ to $W_{f_{k_{2}}, \mathcal{D}_{k_{2}}, \mathcal{T}_{2}}$ if and only if $W_{f_{k_{1}}, \mathcal{D}_{k_{1}}, \mathcal{T}_{1}} \in \mathcal{A}_{k_{2}}$.

Since we are considering a converse bound on the optimal communication load under a specific cache placement and under a specific set of demands, it suffices to find a single $\alpha$-demand such that $R^{\star}_{\alpha, \text{c}}(t) \geq \frac{\binom{\alpha}{t + 1} + (K - \alpha)\binom{\alpha - 1}{t}}{\binom{\alpha}{t}}$. Toward this, consider the $\alpha$-demand where $\mathcal{K}$ is a set of $\alpha$ users that request distinct files from the file class $\mathcal{K}$, and where the remaining users in $[K] \setminus \mathcal{K}$ request distinct files so that the set of vertices
\begin{equation}
  \mathcal{J}_{1} = \bigcup_{k \in [K] \setminus \mathcal{K}} \bigcup_{\mathcal{T} \subseteq \mathcal{D}_{k} \setminus \{k\} : |\mathcal{T}| = t} \{W_{f_{k}, \mathcal{D}_{k}, \mathcal{T}}\}
\end{equation}
is acyclic. Then, such set contains a total of $(K - \alpha)\binom{\alpha - 1}{t}$ subfiles, which means that $|\mathcal{J}_1| = (K - \alpha)\binom{\alpha - 1}{t}\frac{B}{\binom{\alpha}{t}}$. Indeed, we can see that there exist $\alpha$-demands for which $\mathcal{J}_{1}$ is acyclic. For instance, if we assume $\mathcal{D}_{k} =  \mathcal{S} \cup \{k\}$ for every $k \in [K] \setminus \mathcal{K}$ for some $\mathcal{S} \subseteq \mathcal{K}$ such that $|\mathcal{S}| = \alpha - 1$, then the set $\mathcal{J}_{1}$ is acyclic.

Consider now the set of users in $\mathcal{K}$. Take any permutation of users $\bm{u} = (u_{1}, \dots, u_{\alpha})$ with $u_{k} \in \mathcal{K}$ for all $k \in [\alpha]$. Then, since $\mathcal{D}_{k} = \mathcal{K}$ for all $k \in \mathcal{K}$, the set of vertices
\begin{equation}
  \mathcal{J}_{2} = \bigcup_{k \in [\alpha]} \bigcup_{\mathcal{T} \subseteq \mathcal{K} \setminus \{u_{1}, \dots, u_{k}\} : |\mathcal{T}| = t}\{W_{f_{u_{k}}, \mathcal{K}, \mathcal{T}}\}
\end{equation}
is acyclic for any permutation $\bm{u}$ (see \cite[Lemma 1]{8963629}). It can be easily seen that such set contains a total of $\binom{\alpha - 1}{t} + \binom{\alpha - 2}{t} + \dots + \binom{t}{t} = \binom{\alpha}{t + 1}$ subfiles, hence $|\mathcal{J}_2| = \binom{\alpha}{t + 1}\frac{B}{\binom{\alpha}{t}}$.

Due to the fact that $\mathcal{D}_{k} = \mathcal{K}$ for all $k \in \mathcal{K}$, there is no edge connecting any vertex in $\mathcal{J}_{2}$ to any vertex in $\mathcal{J}_{1}$, therefore also the set $\mathcal{J}_{1} \cup \mathcal{J}_{2}$ is acyclic. At this point, applying \Cref{thm: Acyclic Subgraph Converse Bound} with respect to the acyclic set $\mathcal{J}_{1} \cup \mathcal{J}_{2}$, we get
\begin{align}
  BR^{\star}_{\alpha, \text{c}}  &\geq \sum_{k \in [K] \setminus \mathcal{K}} \sum_{\mathcal{T} \subseteq \mathcal{D}_{k} \setminus \{k\} : |\mathcal{T}| = t} \left| W_{f_{k}, \mathcal{D}_{k}, \mathcal{T}} \right| + \sum_{k \in [\alpha]} \sum_{\mathcal{T} \subseteq \mathcal{K} \setminus \{u_{1}, \dots, u_{k}\} : |\mathcal{T}| = t}\left| W_{f_{u_{k}}, \mathcal{K}, \mathcal{T}} \right| \\
                                    & = |\mathcal{J}_1| + |\mathcal{J}_2| \\
                                    & = (K - \alpha)\binom{\alpha - 1}{t}\frac{B}{\binom{\alpha}{t}} + \binom{\alpha}{t + 1}\frac{B}{\binom{\alpha}{t}}
\end{align}
which means that $R_{\alpha, \text{c}}^{\star}(t) \geq \frac{\binom{\alpha}{t + 1} + (K - \alpha)\binom{\alpha - 1}{t}}{\binom{\alpha}{t}}$. This concludes the proof.

\printbibliography

\end{document}